\documentclass[copyright,creativecommons]{eptcs}

\usepackage{amsmath,amssymb,amsthm}
\usepackage{bm}
\usepackage[capitalize]{cleveref}
\usepackage{mathtools}
\usepackage{tikz,tikz-cd}
\usepackage{underscore}
\usepackage{quiver}






\theoremstyle{plain}
\newtheorem{theorem}{Theorem}[section]
\newtheorem{proposition}[theorem]{Proposition}
\newtheorem{lemma}[theorem]{Lemma}

\theoremstyle{definition}
\newtheorem{definition}[theorem]{Definition}

\newtheorem{example}[theorem]{Example}

\newtheorem{remark}[theorem]{Remark}

\newcommand{\A}{\mathcal{A}}
\newcommand{\C}{\mathcal{C}}
\newcommand{\D}{\mathcal{D}}
\newcommand{\G}{\mathcal{G}}
\newcommand{\M}{\mathcal{M}}
\newcommand{\E}{\mathcal{E}}
\newcommand{\TC}{\mathbf{1}}

\newcommand{\Set}{\mathbf{Set}}
\newcommand{\Poset}{\mathbf{Poset}}
\newcommand{\Grade}[2]{\mathbf{Grade}_{#1}#2}

\DeclareMathOperator{\colim}{colim}
\newcommand{\inj}[1]{\mathrm{in}_{#1}}

\newcommand{\Mor}{\mathbf{Mor}}
\newcommand{\Iso}{\mathbf{Iso}}
\newcommand{\Inj}{\mathbf{Inj}}
\newcommand{\Surj}{\mathbf{Surj}}
\newcommand{\Full}{\mathbf{Full}}

\newcommand{\tensor}{\otimes}
\newcommand{\tensorI}{\mathsf{I}}

\newcommand{\gtensor}{\odot}
\newcommand{\gI}{I}
\newcommand{\ctensor}{\boxdot}
\newcommand{\cI}{\mathsf{J}}

\newcommand{\ot}{\tensor}
\newcommand{\I}{{\tensorI}}
\newcommand{\lam}{\lambda}
\newcommand{\al}{\alpha}
\newcommand{\bt}{\ctensor}
\newcommand{\J}{\cI}

\newcommand{\T}{\mathsf{T}}
\newcommand{\monoidal}[1]{\mathbb{#1}}
\newcommand{\lax}[1]{\mathsf{#1}}

\newcommand{\Equiv}[1]{\mathrm{Equiv}_#1}

\newcommand{\mkS}[1]{\mathbf{S}[#1]}
\newcommand{\mkSigma}[1]{\boldsymbol{\Sigma}[#1]}

\newcommand{\oper}{\phi}
\newcommand{\goper}{\psi}

\newcommand{\slice}[2]{#1/#2}
\newcommand{\slicedom}[2]{#2_{#1}}
\newcommand{\slicemor}[2]{\mathrm{snd}}

\newcommand{\Cart}[2]{[#1, #2]_{\mathrm{cart}}}
\newcommand{\CartNt}[2]{[#1, #2]_{\mathrm{cartnt}}}

\renewcommand{\exp}[2]{#1 \Rightarrow #2}
\newcommand{\id}{\mathrm{id}}
\newcommand{\compose}{\circ}
\newcommand{\Id}{\mathrm{Id}}
\newcommand{\fcomp}{\cdot}
\newcommand{\natto}{\Rightarrow}
\newcommand{\tto}{\Rightarrow}
\newcommand{\xto}[1]{\xrightarrow{#1}}
\newcommand{\iso}{\cong}
\renewcommand{\equiv}{\simeq}
\newcommand{\epito}{\twoheadrightarrow}
\newcommand{\monoto}{\rightarrowtail}
\newcommand{\str}{\mathrm{str}}

\newcommand{\append}{\mathbin{+\mkern-5mu+}}
\newcommand{\xs}{\mathit{xs}}

\newcommand{\Nat}{\mathbb{N}}

\newcommand{\eps}{\varepsilon}

\title{Canonical Gradings of Monads}
\author{%
  Flavien Breuvart
  \institute{LIPN, Universit\'e Sorbonne Paris Nord, France}
  \email{flavien.breuvart@lipn.univ-paris13.fr}
  \and
  Dylan McDermott
  \institute{Reykjavik University, Iceland}
  \email{dylanm@ru.is}
  \and
  Tarmo Uustalu
  \institute{Reykjavik University, Iceland}
  \institute{Tallinn University of Technology, Estonia}
  \email{tarmo@ru.is}}

\begin{document}
\maketitle

\vspace*{-5mm}

\begin{abstract}
  We define a notion of grading of a monoid $\T$ in a monoidal category
  $\C$, relative to a class of morphisms $\M$ (which provide a notion
  of $\M$-subobject). We show that, under reasonable conditions
  (including that $\M$ forms a factorization system), there is a
  canonical grading of $\T$. Our application is to graded monads and
  models of computational effects. We demonstrate our results by
  characterizing the canonical gradings of a number of monads, for
  which $\C$ is endofunctors with composition. We also show that we
  can obtain canonical grades for algebraic operations.
\end{abstract}

\vspace*{-5mm}

\section{Introduction}

This paper is motivated by quantitative modelling of computational
effects from mathematical programming semantics. It is standard in
this domain to model notions of computational effect, such as
nondeterminism or manipulation of external state, by (strong) monads
\cite{moggi1989computational}. In many applications, however, it is useful to be able to
work with quantified effects, e.g., how many outcomes a computation
may have, or to what degree it may read or overwrite the state. This
is relevant, for example, for program optimizations or analyses to
assure that a program can run within allocated
resources. Quantification of effectfulness is an old idea and goes
back to type-and-effect systems \cite{lucassen1988polymorphic}. Mathematically, notions of
quantified effect can be modelled by graded (strong) monads \cite{Smirnov2008,mellies2012parametric,katsumata2014}.

It is natural to ask if there are systematic ways for refining a
non-quantitative model of some effect into a quantitative version,
i.e., for producing a graded monad from a monad. In this paper, we
answer this question in the affirmative.  We show how a monad on a
category can be graded with any class of subfunctors (intuitively,
predicates on computations) satisfying reasonable conditions,
including that it forms a factorization system on some monoidal
subcategory of the endofunctor category. Moreover, this grading is
canonical, namely universal in a certain 2-categorical sense. We also
show that algebraic operations of the given monad give rise to
\emph{flexibly graded} algebraic operations \cite{KMUW:flepgm} of the
canonically graded monad. Instead of working concretely with monads on a
category, we work abstractly with monoids in a (skew) monoidal category
equipped with a factorization system.

The structure of the paper is this. In \cref{sec:grading-objects}, we introduce
the idea of grading by subobjects for general objects and instantiate
this for grading of functors. We then proceed to gradings of monoids
and monads in \cref{sec:grading-monoids}. In \cref{sec:cartesian-grading}, we explore the
specific interesting case of grading monads canonically by subsets of
their sets of shapes. In \cref{sec:algebraic-operations}, we explain the emergence of
canonical flexibly graded algebraic operations for canonical gradings
of monads. One longer proof is in Appendix~\ref{sec:proofs}.

We introduce the necessary concepts regarding the classical topics of
monads, monoidal categories and factorization systems. For
additional background on
the more specific concepts of graded monad and skew monoidal category,
which we also introduce, we refer to \cite{katsumata2014,fujii2016towards} and \cite{Szl:skemcb,LS:triaos} as entry points.

\section{Grading objects and functors}\label{sec:grading-objects}
As a first step towards gradings of monoids, we introduce the notion of
a grading of an object of a category $\C$ with respect to a class of
morphisms $\M$ in $\C$.
We show that every object $T$ has a canonical such grading.
The case we care most about is when $\C$ is a category of endofunctors,
so that, in the next section, where we
extend these results to gradings of monoids, the monoids are exactly
monads.

\begin{definition}
  Let $\G$ be a category, whose objects $e$ we call \emph{grades}.
  A \emph{$\G$-graded object} of a category $\C$ is a
  functor $G : \G \to \C$.
\end{definition}
Let $\M$ be a class of morphisms of a category $\C$, and $T$ be an
object of $\C$.
There is a category $\slice{\M}{T}$, which has as objects
\emph{$\M$-subobjects} of $T$, i.e., pairs $(S, s)$
of an object $S$ and an $\M$-morphism $s : S \monoto T$.
Morphisms $f : (S, s) \to (S', s')$ are $\C$-morphisms $f : S \to S'$
such that $s = s' \compose f$.
We then have a $\slice{\M}{T}$-graded object $\slicedom{\M}{T}$ of $\C$,
defined by $\slicedom{\M}{T}(S, s) = S$.
This graded object forms an $\M$-grading in the sense of the following
definition, and is in fact the canonical $\M$-grading
(see \cref{thm:canonical-grading-object} below).

\begin{definition}
  Let $\M$ be a class of morphisms of a category $\C$.  An
  \emph{$\M$-grading} $(\G, G, g)$ of an object $T$ of $\C$ consists
  of a category $\G$, a functor $G : \G \to \C$ (= a $\G$-graded
  object of $\C$), and a natural transformation typed
  $g_d : Gd \monoto T$ whose components are all in $\M$.  A
  \emph{morphism} $(F, f) : (\G, G, g) \to (\G', G', g')$ between such
  gradings is a functor $F : \G \to \G'$ equipped with a natural
  isomorphism $f : G' \fcomp F \iso G$, such that $g_d \compose f_d = g'_{Fd}$.
  \[
    \begin{tikzcd}[row sep=small]
	\G && \TC \\
	\\
	{\G'} && \C
	\arrow["F"', from=1-1, to=3-1]
	\arrow["{G'}"', from=3-1, to=3-3]
	\arrow[""{name=0, anchor=center, inner sep=0}, "G"{description}, from=1-1, to=3-3]
	\arrow["T", from=1-3, to=3-3]
	\arrow[from=1-1, to=1-3]
	\arrow["f", shorten <=8pt, shorten >=8pt, Rightarrow, from=3-1, to=0]
	\arrow["g", shorten <=11pt, shorten >=11pt, Rightarrow, from=0, to=1-3]
\end{tikzcd}
\quad=\quad
  \begin{tikzcd}[row sep=small]
	\G && \TC \\
	\\
	{\G'} && \C
	\arrow["F"', from=1-1, to=3-1]
	\arrow["{G'}"', from=3-1, to=3-3]
	\arrow[""{name=0, anchor=center, inner sep=0}, "T", from=1-3, to=3-3]
	\arrow[from=1-1, to=1-3]
	\arrow[from=3-1, to=1-3]
        \arrow["{g'}"'{xshift=0.1em, yshift=0.5ex}, shorten <=28pt, shorten >=22pt, Rightarrow, from=3-1, to=0]
\end{tikzcd}
\]
  A \emph{2-cell} $\beta : (F, f) \tto (F', f')$ between such morphisms
  is a natural transformation $\beta : F \natto F'$ such that
  $f' \compose (G' \cdot \beta) = f$.
  These form a 2-category $\Grade{\M}{T}$.
\end{definition}

We organize gradings into a 2-category so that we can prove a universal
property (the following theorem) that characterizes the canonical
grading. The characterization is up to equivalence;
there is no reason to distinguish between isomorphic grades, therefore we
can work with gradings that are equivalent to the canonical one.
Working up to equivalence has the added benefit that, since
$\slice{\M}{T}$ is often equivalent to a small category, we often
have a canonical grading with a small set of grades.

\begin{theorem}\label{thm:canonical-grading-object}
  Let $\M$ be a class of a morphisms of a category $\C$, and let $T$ be an
  object of $\C$.  The data $(\slice{\M}{T}, \slicedom{\M}{T}, \slicemor{\M}{T})$,
  where $\slicedom{\M}{T} (S, s) = S$ and
  $\slicemor{\M}{T} (S, s) = s$, make a grading of $T$. This grading is
  canonical in the sense that it is the pseudoterminal object of
  $\Grade{\M}{T}$.
  Explicitly, for every other $\M$-grading
  $(\G, G, g)$ of $T$:
  \begin{itemize}
    \item there is a morphism
      $(F,f) : (\G,G,g) \to (\slice{\M}{T}, \slicedom{\M}{T},
      \slicemor{\M}{T})$ of $\M$-gradings;
    \item this morphism is essentially unique in the sense that
      there is a natural assignment of an isomorphism
      $(F', f') \iso (F, f)$ to every
      $(F', f') : (\G,G,g) \to (\slice{\M}{T}, \slicedom{\M}{T},
      \slicemor{\M}{T})$.
  \end{itemize}
\end{theorem}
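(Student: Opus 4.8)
The plan is to establish the three parts in order, the engine of the argument being that the domain functor $\slicedom{\M}{T}$ is faithful and acts as the identity on the underlying $\C$-morphisms; this rigidity is what forces the 2-categorical data and yields pseudoterminality rather than mere terminality. First I would confirm that $(\slice{\M}{T}, \slicedom{\M}{T}, \slicemor{\M}{T})$ is genuinely an $\M$-grading: functoriality of $\slicedom{\M}{T}$ is immediate, the component of $\slicemor{\M}{T}$ (the second-projection family) at $(S,s)$ is $s$, which lies in $\M$ by the definition of $\slice{\M}{T}$, and naturality of $\slicemor{\M}{T}$ as a transformation into the constant functor at $T$ is precisely the defining equation $s = s' \compose h$ of a morphism $h$ of $\slice{\M}{T}$.

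Next, for an arbitrary grading $(\G, G, g)$ I would construct the comparison morphism $(F,f)$ into the canonical one. On objects put $F d = (Gd, g_d)$; this is a well-formed object of $\slice{\M}{T}$ exactly because each $g_d$ lies in $\M$. On a morphism $h : d \to d'$ put $F h = Gh$, which is a morphism of $\slice{\M}{T}$ since naturality of $g$ gives $g_{d'} \compose Gh = g_d$. Because $\slicedom{\M}{T}(Fd) = Gd$ strictly, I can take $f = \id$, and the morphism condition $g_d \compose f_d = \slicemor{\M}{T}_{Fd}$ then collapses to the identity $g_d = g_d$.

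Finally I would prove essential uniqueness by showing that the whole hom-category $\Grade{\M}{T}\big((\G,G,g), (\slice{\M}{T}, \slicedom{\M}{T}, \slicemor{\M}{T})\big)$ is a contractible groupoid, i.e.\ equivalent to $\TC$ --- which is the content of pseudoterminality. Take any two comparison morphisms $(F_1,f_1), (F_2,f_2)$. A 2-cell $\beta : (F_1,f_1) \tto (F_2,f_2)$ satisfies $f_2 \compose (\slicedom{\M}{T} \fcomp \beta) = f_1$, which forces $\slicedom{\M}{T}(\beta_d) = f_{2,d}^{-1} \compose f_{1,d}$; since $\slicedom{\M}{T}$ is faithful, $\beta$ is thereby uniquely determined, giving at most one 2-cell. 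For existence I would verify that this forced $\C$-morphism is actually a morphism of $\slice{\M}{T}$ --- the equation $\slicemor{\M}{T}_{F_2 d} \compose (f_{2,d}^{-1} \compose f_{1,d}) = \slicemor{\M}{T}_{F_1 d}$ follows from the two morphism conditions $g_d \compose f_{i,d} = \slicemor{\M}{T}_{F_i d}$ --- that it is natural (from naturality of $f_1$ and $f_2$), and that it is invertible (its componentwise inverse meets the symmetric conditions). Specializing $(F_2, f_2)$ to the $(F, f)$ built above produces the asserted isomorphism $(F',f') \iso (F,f)$ for each $(F',f')$, and the uniqueness of 2-cells makes this assignment automatically natural.

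The argument is largely bookkeeping; the one point demanding care is the 2-categorical one --- tracking the directions of the isomorphisms $f_i$ and the whiskering $\slicedom{\M}{T} \fcomp \beta$ in the 2-cell equation, and recognizing that the faithfulness (indeed near-injectivity on morphisms) of $\slicedom{\M}{T}$ is exactly what collapses each hom-category to a point, so that the universal property is pseudoterminality and not merely 1-categorical terminality.
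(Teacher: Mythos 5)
Your proposal is correct and follows essentially the same route as the paper: the comparison morphism $Fd = (Gd, g_d)$, $Fh = Gh$, $f = \id$ is identical, and your forced 2-cell component $f_{2,d}^{-1}\compose f_{1,d}$ specializes to the paper's $\beta_{(F',f'),d} = f'_d$. Your only refinement is to derive uniqueness of 2-cells from faithfulness of $\slicedom{\M}{T}$ and thereby get naturality of the assignment for free, where the paper simply exhibits $\beta$, $\beta^{-1}$ and asserts naturality is clear.
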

\begin{proof}
  For existence, define
  $(F,f) : (\G,G,g) \to (\slice{\M}{T}, \slicedom{\M}{T}, \slicemor{\M}{T})$
  by
  \[
    Fd = (Gd, g_d) \quad Fh = Gh
    \qquad
    f_d = \id_{Gd}
  \]
  For uniqueness, given $(F', f')$, we have 2-cells
  $\beta_{(F', f')} : (F', f') \natto (F, f)$ and
  $\beta_{(F', f')}^{-1} : (F, f) \natto (F', f')$ given by $\beta_{(F',
  f'),d} = f'_d$ and
  $\beta^{-1}_{(F', f'), d} = f'^{-1}_d$.
  These are clearly natural in $(F', f')$ and inverse to each other,
  so we can use $\beta$ as the
  required natural isomorphism $(F', f') \iso (F, f)$.
\end{proof}

\begin{remark}
  In this paper, we discuss the problem of constructing canonical
  gradings, but one can also consider the dual problem of constructing
  canonical \emph{degradings}.
  For graded monads, this problem is discussed in
  \cite{fritz2018criterion,MPU:degl}.
  In the setting of this section, the initial degrading of a functor
  $G : \G \to \C$ would be the colimit $\colim G$, together with the
  morphisms $\inj{e} : Ge \to \colim G$ (when the colimit exists).
  The data $(\G, G, \inj{})$ is then an $\M$-grading of $\colim G$
  whenever $\inj{e}$ is in $\M$ for all $e$ (which is the case for our
  examples).
  This grading will typically not be the canonical grading of $\colim G$
  however.
  (For graded monads the situation is more complex: one does not take an
  ordinary colimit, but instead a colimit in a 2-category of monoidal
  categories, as discussed in \cite{fritz2018criterion}.)
\end{remark}

\subsection{Canonical gradings of endofunctors on \texorpdfstring{$\Set$}{Set}}
We give several examples for the case where $\C = [\Set, \Set]$ and $\M$
is the class of natural transformations whose components are injective functions.
In this case, every $\M$-subobject of an endofunctor $T : \Set \to \Set$
is isomorphic in $\slice{\M}{T}$ to a unique $\M$-subobject
$s : S \monoto T$ in which each injection $s_X$ is an inclusion.
In other words, $s$ is a choice of a subset $SX \subseteq TX$ for every
set $X$, closed under the action of $T$ in the sense that $x \in SX$
implies $Tfx \in SY$ for every $f : X \to Y$.
Below we characterize $\M/T$ up to equivalence for various
endofunctors $T$, using the fact that we need only consider the case
where $s$ is a family of inclusions.

\begin{example}
  The category $\M/\Id$ is equivalently the poset
  $\{\bot \leq \top\}$, with $\bot$ corresponding to the $\M$-subobject
  $S$ given by $SX = \emptyset$ for all $X$, and $\top$ corresponding to $SX = X$ for all $X$.
\end{example}

\begin{example}\label{ex:writer-functor}
  Consider the endofunctor $M \times T{-}$, where $M$ is a
  set and $T$ is an endofunctor on $\Set$.
  The category
  $
    \slice{\M}{(M \times T{-})}
  $,
  is equivalent to the category $(\slice{\M}{T})^M$, in which objects
  are $M$-indexed families $(\Sigma_z \in \slice{\M}{T})_{z \in M}$ of
  $\M$-subobjects of $T$.
  This is the case because, from every $S \monoto M \times T{-}$ in
  which each component is an inclusion, we can construct such a family
  $\mkSigma S$, and this construction forms a bijection with inverse
  $\mkS {{-}}$.
  \[
    {\mkSigma S}_z\mspace{1mu}X = \{x \in TX \mid (z, x) \in SX\}
    \qquad
    \mkS \Sigma X = \{(z, x) \in M \times TX \mid x \in \Sigma_z X\}
  \]
  In the special case $T = \Id$, we have
  $\slice{\M}{(M\times ({-}))} \equiv (\slice{\M}{\Id})^M \equiv
  \{\bot\leq\top\}^M \equiv (\mathcal P M, \subseteq)$, so
  the $\M$-subobjects of $M \times ({-})$ are equivalently the subsets
  of $M$, ordered by inclusion.
\end{example}

\begin{example}\label{ex:reader-functor}
  Consider the endofunctor $\exp{V}{({-})}$ (the underlying functor of
  the reader monad on $\Set$), where $V$ is a fixed set, and let $\M$ be
  the class of componentwise injective natural transformations.
  The $\M$-subobjects of $\exp{V}{({-})}$, and hence the objects of the
  canonical $\M$-grading of $\exp{V}{({-})}$, are equivalently
  upwards-closed sets of equivalence relations on $V$.

  To explain this in more detail, let $\Equiv V$ be the set of
  equivalence relations $R$ on $V$, considered as subsets
  $R \subseteq V \times V$.
  A function $f : V \to X$ \emph{respects} $R \in \Equiv V$ when
  $v\,R\,v'$ implies $f v = f v'$ for all $v, v' \in V$, equivalently,
  when $f$ factors through the quotient $[{-}]_R : V \to V/R$.
  A set $\Sigma \subseteq \Equiv V$ of equivalence relations is
  \emph{upwards-closed} when $R \in \Sigma$ implies $R' \in \Sigma$
  for all $R, R' \in \Equiv V$ with $R \subseteq R'$.
  Every such $\Sigma$ induces a subfunctor
  $\mkS \Sigma \monoto \exp{V}{({-})}$, defined by
  \[
    \mkS \Sigma\,X =
      \{f : V \to X \mid f~\text{respects some}~R \in \Sigma\}
  \]
  To go in the other direction, consider a subfunctor
  $S \monoto \exp{V}{({-})}$ in which every component of the
  $\M$-morphism is an inclusion.
  We obtain an upwards-closed $\mkSigma{S} \subseteq \Equiv V$:
  \[
    \mkSigma S = \{R \in \Equiv V \mid [{-}]_R \in S(V/R)\}
  \]
  This is upwards-closed because if $R \subseteq R'$ then
  $[{-}]_{R'} : V \to V/R'$ factors through $[{-}]_R : V \to V/R$, and
  since $S$ forms a functor, the family $S$ is closed under postcomposition.
  These two constructions are in bijection, with
  $\mkS{\mkSigma{S}} = S$ and $\mkSigma{\mkS{\Sigma}} = \Sigma$.
  It follows that $\M/{(\exp{V}{({-})})}$ is equivalent to the poset of
  upwards-closed sets $\Sigma \subseteq \Equiv{V}$, ordered by
  inclusion, and hence that this poset forms the canonical $\M$-grading of
  $\exp{V}{({-})}$.
\end{example}

\begin{example}\label{ex:state-functor}
  Consider the endofunctor $\exp{V}{V \times ({-})}$ (the underlying
  functor of the state monad), where $V$ is a set.
  Since $\exp{V}{V \times ({-})} \iso (\exp{V}{V}) \times (\exp{V}{({-})})$,
  we can combine \cref{ex:writer-functor,ex:reader-functor} to
  characterize the $\M$-subobjects of $\exp{V}{V \times ({-})}$.
  Every such subobject equivalently consists of an upwards-closed set
  $\Sigma_{p} \subseteq \Equiv{V}$ for each function $p : V \to V$.
  These can also be seen as subsets
  $\Sigma \subseteq (\exp{V}{V}) \times \Equiv{V}$
  such that $\{R \mid (p, R) \in \Sigma\}$ is upwards-closed for each
  $p : V \to V$.
  Given such a $\Sigma$, the corresponding $\M$-subobject
  $\mkS \Sigma \monoto (\exp{V}{V \times ({-})})$ is
  \[
    \mkS \Sigma X = \{f : V \to V \times X \mid
      \exists (p, R) \in \Sigma.~~
        \pi_1 \compose f = p~\wedge~\pi_2 \compose f \text{~respects~} R\}
  \]
\end{example}

\section{Grading monoids and monads}\label{sec:grading-monoids}

We proceed to grading monoids.
To define the notion of grading of a monoid, we need an appropriate
multiplication operation on the grades.
The obvious idea to to ask for the grades to form a monoidal category
instead of just a category, and much of the previous work on graded
monads (such as \cite{mellies2012parametric}) does exactly this.
However, in some of examples we do not get a monoidal category of
grades, but only a \emph{skew} monoidal category \cite{Szl:skemcb} of
grades.

\begin{definition}
  A \emph{(left-)skew monoidal} category is a category $\C$
  with a distinguished object
  $\I$, a functor $\ot : \C \times \C \to \C$ and three natural
  transformations $\lam$, $\rho$, $\al$ typed
\[
\lam_{X} : \I \ot X \to X \qquad
\rho_{X} : X \to X \ot \I \qquad
\al_{X,Y,Z} : (X \ot Y) \ot Z \to X \ot (Y \ot Z)
\]
satisfying the equations
\[
\small
\mathrm{(m1)}  
\begin{tikzcd}
    & \I \ot \I \arrow[dr, "\lambda_\I"] & \\
    \I \arrow[ur, "\rho_\I"] \ar[rr, equals]
    & & \I
\end{tikzcd}
\qquad
\mathrm{(m2)}\
\begin{tikzcd}
      (X\ot \I) \ot Y \arrow[r, "\alpha_{X,\I,Y}"]
      & X \ot (\I\ot Y) \arrow[d, "X\ot \lambda_{Y}"]\\
      X \ot Y \arrow[r, equals] \arrow[u, "\rho_{X}\ot Y"]
      &  X \ot Y 
\end{tikzcd}
\]
\[
\small
\mathrm{(m3)}\ 
\begin{tikzcd}
  (\I \ot X) \ot Y \arrow[rr, "\alpha_{\I,X,Y}"] \arrow[dr, "\lam_{X} \ot Y"']
  & &  \I \ot (X \ot Y) \arrow[dl, "\lambda_{X \ot Y}"] \\
  & X \ot Y
    & 
\end{tikzcd}
\qquad
\mathrm{(m4)}\ 
\begin{tikzcd}
  (X \ot Y) \ot \I \arrow[rr, "\alpha_{X,Y,\I}"] 
    & &  X \ot (Y \ot \I) \\
  & X \ot Y \arrow[ul, "\rho_{X \ot Y}"] \arrow[ur, "X \ot \rho_Y"'] 
\end{tikzcd}
\]
\[
\small
\mathrm{(m5)}\ 
\begin{tikzcd}
(X\ot (Y \ot Z)) \ot W \arrow[rr, "\alpha_{X,Y \ot Z,W}"]
  & & X\ot ((Y \ot Z)\ot W) \arrow[d, "X\ot \alpha_{Y,Z,W}"]
  \\
((X\ot Y) \ot Z) \ot W \arrow[u, "\alpha_{X,Y,Z} \ot W"]
      \arrow[r, "\alpha_{X\ot Y,Z,W}"]
  & (X\ot Y) \ot (Z \ot W) \arrow[r, "\alpha_{X,Y,Z\ot W}"]
    & X\ot (Y \ot (Z \ot W))
\end{tikzcd}
\]
$(\C, \I, \ot)$ is partially normal if one or several of $\lam$,
$\rho$ or $\alpha$ is a natural isomorphism. In particular, it is
\emph{left-normal} if $\lam$ is an isomorphism. A monoidal category is
a fully normal skew monoidal category.

A \emph{right-skew monoidal category} is given by
$(\C, \tensorI, \tensor, \lam, \rho, \al)$ such that the data
$(\C, \tensorI, \tensor^\mathrm{rev}, \rho, \lam, \al)$, where
$X \ot^\mathrm{rev} Y = Y \ot X$, form a left-skew monoidal category.
\end{definition}

\begin{definition}
  A \emph{monoid} in a skew monoidal category $(\C, \I, \ot)$
  is an object $T$ of $\C$ equipped with morphisms
\[
\eta : \I \to T \qquad \mu : T \ot T \to T
\]
satisfying the equations
\[
\begin{tikzcd}[row sep=small]
  T \arrow[ddrr, equals] \arrow[d, "\rho_T"']
  & \I \ot T \arrow[l, "\lambda_T"'] \arrow[r, "\eta \ot T"]
    & T \ot T \arrow[dd, "\mu"] \\
  T \ot \I \arrow[d, "T \ot \eta"']
  & & \\
  T \ot T \arrow[rr, "\mu"]
  & & T
\end{tikzcd} 
\qquad
\begin{tikzcd}[row sep=small]
  (T \ot T) \ot T \arrow[d, "\alpha_{T,T,T}"'] \arrow[rr, "\mu \ot T"]
  & & T \ot T \arrow[dd, "\mu"] \\
  T \ot (T \ot T) \arrow[d, "T \ot \mu"']
  & & \\
  T \ot T \arrow[rr, "\mu"]
  & & T
\end{tikzcd}  
\]  
\end{definition}

The concept of lax monoidal functor between skew monoidal categories
is defined as for monoidal categories; the same applies to the concept
of monoidal transformations between lax monoidal functors.

\begin{definition}
  Given a skew monoidal category $\monoidal{G} = (\G, \gI, \gtensor)$, a $\monoidal{G}$-\emph{graded monoid} in a skew monoidal category
  $\monoidal{C} = (\C, \tensorI, \tensor)$ is the same as a lax monoidal functor
  $\lax{G} : \monoidal{G} \to \monoidal{C}$. Explicitly, it is
  a functor $G : \G \to \C$ with a morphism $\eta : \tensorI \to G \gI$
  and a natural transformation typed
  $\mu_{d,d'} : G d \tensor G d' \to G (d \gtensor d')$ subject to
  equations similar to those of a monoid.
\end{definition}

\begin{definition}
  Let $\T = (T, \eta, \mu)$ be a monoid in a skew monoidal category
  $\monoidal{C} = (\C, \tensorI, \tensor)$, and let $\M$ be a class of morphisms of
  $\C$.
  An $\M$-\emph{grading} $(\monoidal{G}, \lax{G}, g)$ of the monoid $\T$ consists
  of a skew monoidal category $\monoidal{G}$, a lax monoidal functor
  $\lax{G} : \monoidal{G} \to \monoidal{C}$ (= a $\monoidal{G}$-graded monoid in $\monoidal{C}$), and a monoidal transformation typed $g_d : Gd \monoto T$,
  whose components are all in $\M$.
  A \emph{morphism}
  $(\lax{F}, f) : (\monoidal{G},\lax{G}, g) \to (\monoidal{G}',\lax{G}', g')$
  between such gradings is a lax monoidal functor
  $\lax{F} : \monoidal{G} \to \monoidal{G'}$ equipped with a monoidal
  isomorphism $f : \lax{G}' \fcomp \lax{F} \iso \lax{G}$,
  such that $g_d \compose f_d = g'_{Fd}$.
  A \emph{2-cell} $\beta : (\lax{F}, f) \tto (\lax{F}', f')$ is a monoidal
  transformation $\beta : \lax{F} \natto \lax{F}'$ such that
  $f' \compose (\lax{G}' \fcomp \beta) = f$.
  We write $\Grade {\M} {\T}$ for this 2-category.
\end{definition}

\begin{example}\label{ex:state-grading}
  The situation we are mainly interested in is when
  $\C = [\D, \D]$ is the category of endofunctors on some $\D$,
  with the identity for $\tensorI$ and functor composition for $\tensor$.
  In this case, monoids in $\monoidal{C}$ are exactly monads on $\D$, and
  a lax monoidal functor
  $\lax{G} : \monoidal{G} \to \monoidal{C}$ (a $\monoidal{G}$-graded monoid in $\monoidal{C}$) is a
  $\monoidal{G}$-graded monad on $\D$, in the sense of
  \cite{Smirnov2008,mellies2012parametric,katsumata2014}.
  Explicitly, the unit and multiplication of $\lax{G}$ have the form
  \[
    \eta_X : X \to G \gI X
    \qquad
    \mu_{e, e', X} : G e (G e' X) \to G (e\gtensor e') X
  \]

  For a concrete example, let $V$ be a set (of states), and let $\T$ be
  the state monad over $V$:
  \[
    T X = \exp{V}{V \times X}
    \qquad
    \eta_X\,x\,v = (v, x)
    \qquad
    \mu_X\,f\,v = g\,v'~~\text{where}~(v', g) = f\,v
  \]
  We give a $\M$-grading $(\monoidal{G}, \lax{G}, g)$ of $\T$, where
  $\M$ is componentwise injective natural transformations.
  Let $\G$ be the poset of subsets of $\{\mathsf{get}, \mathsf{put}\}$
  ordered by inclusion, which forms a strict monoidal category with
  $\emptyset$ for the unit $\gI$ and $e \cup e'$ for the tensor $e \gtensor e'$.
  We then define $\lax{G}$ by
  \[
    \begin{array}{r@{}l}
      G e = \{ f : V&\,\to V \times X
      \\\mid~&
        \mathsf{get} \not\in e \Rightarrow (
          \pi_1 \compose f~\text{is a constant function or~$\id_V$}
          ~\wedge~\pi_2 \compose f~\text{is a constant function})
      \\&
        \wedge~\mathsf{put} \not\in e \Rightarrow \pi_1 \compose
        f~\text{is $\id_V$} \}
    \end{array}
  \]
  with unit and multiplication defined as for $\T$.
  This forms an $\M$-grading with the inclusions for $g$.

  This grading is suitable for interpreting a Gifford-style effect
  system~\cite{lucassen1988polymorphic} for global state.
  A function $f : V \to V \times X$ is a computation
  $f \in TX$, sending an initial state to a pair of a final state and
  a result.
  A grade $e$ gives the set of operations that a computation may use
  when it is executed, so $GeX$ is the subset of $TX$ on the
  computation that only use the operations in $e$.
  For example, $G\{\mathsf{get}\}X$ contains computations that may use the
  initial state, but do not change the state (with $\mathsf{put}$).
\end{example}

We turn now to the \emph{canonical} grading of a monoid $\T$ in a
skew monoidal category $\monoidal{C}$.
Since the category $\M/T$ forms the canonical grading of the
\emph{object} $T$, we show that (under the conditions explained below),
we can make $\M/T$ into a skew monoidal category, using the monoid
structure of $\T$.

First consider the slice category $\slice{\C}{T}$ (where we do not
restrict to $\M$-morphisms).
This already forms a skew monoidal category, with
\[
  \I \xto{\eta} T
  \qquad
  S \tensor S' \xto{s \tensor s'} T \tensor T \xto{\mu} T
\]
for the unit and tensor of $(S, s)$ and $(S', s')$ (see for example
Kelly~\cite{kelly1992clubs}).
In general this skew monoidal structure will not restrict to $\M/T$,
because the morphism $\mu$ is not in $\M$ for many of our examples.
However, we can make $\M/T$ into a skew monoidal category by
adapting the skew monoidal structure on $\C/T$.
The idea is to just factorize the morphisms we use in the tensor and
unit of $\C/T$ to obtain morphisms in $\M$.
Hence we ask that $\M$ forms an orthogonal factorization system in the
usual sense.

\begin{definition}
  An (orthogonal) \emph{factorization system} on a category $\C$ is a
  pair $(\E, \M)$ of classes of morphisms of $\C$, such that
  \begin{itemize}
    \item both $\E$ and $\M$ contain all isomorphisms, and are closed
      under composition;
    \item $\E$-morphisms are \emph{orthogonal} to $\M$-morphisms:
      for every commuting square in $\C$ as on the left below, with $e \in \E$
      and $m \in \M$, there is a unique $d$ making the diagram
      on the right below commute.
      \[
        \begin{tikzcd}[row sep=small]
          X \arrow[r, "e", two heads] \arrow[d, "f"'] &
          Y \arrow[d, "g"] \\
          X' \arrow[r, "m"', tail] &
          Y'
        \end{tikzcd}
        \qquad\qquad
        \begin{tikzcd}[row sep=small]
          X \arrow[r, "e", two heads] \arrow[d, "f"'] &
          Y \arrow[d, "g"] \arrow[ld, "d" description, dashed] \\
          X' \arrow[r, "m"', tail] &
          Y'
        \end{tikzcd}
      \]
    \item every morphism $f : X \to Y$ in $\C$ has an $(\E,
      \M)$-factorization: there exist an
      object $S$, an $\E$-morphism $e : X \epito S$, and an $\M$-morphism
      $m : S \monoto Y$ such that the diagram below commutes.
      \[
        \begin{tikzcd}[row sep=tiny]
          X \arrow[rr, "f"] \arrow[rd, "e"', two heads] &&
          Y \\
          &
          S \arrow[ru, "m"', tail]
        \end{tikzcd}
      \]
  \end{itemize}
\end{definition}

\begin{example}
  Writing $\Mor$ for the class of all morphisms and $\Iso$ for the class
  of isomorphisms, every category has
  $(\Mor, \Iso)$ and $(\Iso, \Mor)$ as factorization systems.
  On $\Set$, the classes of surjective and of injective functions form a
  factorization system $(\Surj, \Inj)$.
  To factorize a function $f : X \to Y$ in this case, we let
  $S = \{f\,x \mid x \in X\}$ be the image of $f$, and define
  \begin{tikzcd}[cramped,column sep=1.6em]
    X \arrow[r, "e", two heads] &
    S \arrow[r, "m", tail] &
    Y
  \end{tikzcd}
  by $e\,x = f\,x$ and $m\,y = y$.
  On the category $\Poset$ of partially ordered sets and monotone
  functions, we have a factorization system $(\Surj, \Full)$, where
  $\Surj$ is the class of surjective monotone functions and $\Full$ is
  the class of full functions, i.e.\ monotone functions $m : S \to Y$
  such that $m\,x \leq m\,y$ implies $x \leq y$.
  Factorizations are given as in $\Set$, with the order on $S$ inherited
  from the order on $Y$.
\end{example}

We are primarily interested in canonically grading monads, which are
monoids in the endofunctor category $\C = [\D, \D]$, with functor
composition as the tensor.
We therefore want a factorization system on $[\D, \D]$.
We give the most standard option for this as the following example,
but there are others we are interested in (see \cref{lemma:partly-cartesian-factorizations} below).
In models of computational effects we in fact usually want a
\emph{strong} monad.
If $\D$ is monoidal, then a \emph{strong endofunctor} on $\D$ is a
functor $F : \D \to \D$ equipped with a \emph{strength}, i.e.\ a
natural transformation
$\str_{\Gamma, X} : \Gamma \tensor_{\D} FX \to F(\Gamma \tensor_{\D} X)$
satisfying two laws for compatibility with the left unitor and
associator of $\D$.
These form a monoidal category $[\D, \D]_s$, in which morphisms are
strength-preserving natural transformations and the tensor is
composition.
Strong monads are monoids in $[\D, \D]_s$.
Below we consider non-strong monads for simplicity, but we can also
apply our results to strong monads using a factorization system on
$[\D, \D]_s$.

\begin{example}
  If $(\E, \M)$ is a factorization system on a category $\D$, then
  the endofunctor category $[\D, \D]$ has a factorization system
  $(\text{componentwise-}\E, \text{componentwise-}\M)$.
  Factorizations $F \epito S \monoto G$ of natural transformations are
  componentwise.
  If $\D$ is monoidal and $\E$ is closed under $\Gamma \tensor_\D ({-})$
  for all $\Gamma$ then
  $(\text{componentwise-}\E, \text{componentwise-}\M)$ is a
  factorization system on $[\D, \D]_s$.
  Morphisms are again factorized componentwise; for the construction of
  the strength for $S$ see \cite[Section~2.2]{kammar2018factorisation}.
\end{example}

Forming a factorization system $(\E, \M)$ is merely a property of a class
$\M$ of morphisms, because $\E$ is necessarily the class of all morphisms $e$ that
are orthogonal to all $\M$-morphisms.
Factorizations of morphisms are unique up to unique isomorphism.

If $\M$ forms a factorization system $(\E, \M)$, then for a given monoid
$\T$ we construct a unit $\cI$ and a tensor $\ctensor$ for the
category $\slice{\M}{T}$  by factorizing morphisms as follows:
\[
  \begin{tikzcd}[row sep=tiny]
    \tensorI \arrow[rr, "\eta"] \arrow[rd, "q"', two heads] &&
    T \\
    &
    \cI \arrow[ru, tail]
  \end{tikzcd}
  \qquad
  \begin{tikzcd}[row sep=tiny]
    S \tensor S'
    \arrow[r, "s \tensor s'"]
    \arrow[rd, "q_{S, S'}"', two heads] &
    T \tensor T
    \arrow[r, "\mu"] &
    T \\
    &
    S \ctensor S'
    \arrow[ru, tail]
  \end{tikzcd}
\]
To construct the required structural morphisms,
we make the additional assumption that $\E$ is closed under
$({-}) \tensor S$ for every $S \monoto T$.
Under this assumption, the following squares, which all commute, have a unique
diagonal, and these diagonals are the required structural morphisms.
\[
  \begin{tikzcd}[column sep=large, row sep=small]
    S_1 \tensor S'_1
    \arrow[r, "q_{S_1, S'_1}", two heads]
    \arrow[d, "f \tensor f'"'] &
    S_1 \ctensor S'_1
    \arrow[dd, tail]
    \arrow[ldd, "f \ctensor f'"{description}, dashed] \\
    S_2 \tensor S'_2
    \arrow[d, "q_{S_2, S'_2}"', two heads] \\
    S_2 \ctensor S'_2
    \arrow[r, tail] &
    T
  \end{tikzcd}
  ~~\text{where}~~
  \begin{tikzcd}[column sep=0.5em, row sep=huge]
    S_1 \arrow[rr, "f"] \arrow[rd, tail] &&
    S_2 \arrow[ld, tail] \\
    &
    T
  \end{tikzcd}
  \begin{tikzcd}[column sep=0.5em, row sep=huge]
    S'_1 \arrow[rr, "f'"] \arrow[rd, tail] &&
    S'_2 \arrow[ld, tail] \\
    &
    T
  \end{tikzcd}
  ~~\text{are morphisms in $\M/T$}
\]
\[
  \begin{tikzcd}[column sep=small, row sep=large]
    \tensorI \tensor S
    \arrow[r, "q \tensor S", two heads]
    \arrow[d, "\lambda_S"'] &
    \cI \tensor S
    \arrow[r, "q_{\cI, S}", two heads] &
    \cI \ctensor S
    \arrow[d, tail]
    \arrow[lld, "\ell_S"{description}, dashed] \\
    S \arrow[rr, tail] &&
    T
  \end{tikzcd}
  \quad
  \begin{tikzcd}[row sep=small, column sep=normal]
    S
    \arrow[r, equals]
    \arrow[d, "\rho_S"'] &
    S
    \arrow[ddd, tail]
    \arrow[lddd, "r_S"{description}, dashed] \\
    S \tensor \tensorI
    \arrow[d, "S \tensor q"'] \\
    S \tensor \cI
    \arrow[d, "q_{S, \cI}"', two heads] \\
    S \ctensor \cI
    \arrow[r, tail] &
    T
  \end{tikzcd}
  \quad
  \begin{tikzcd}[sep=1.8em]
    (S \tensor S') \tensor S''
    \arrow[r, "q_{S, S'} \tensor S''", two heads]
    \arrow[d, "\alpha_{S, S', S''}"'] &
    (S \ctensor S') \tensor S''
    \arrow[r, "q_{S \ctensor S', S''}", two heads] &
    (S \ctensor S') \ctensor S''
    \arrow[ddd, tail]
    \arrow[llddd, "a_{S, S', S''}"{description}, dashed] \\
    S \tensor (S' \tensor S'')
    \arrow[d, "S \tensor q_{S', S''}"'] \\
    S \tensor (S' \ctensor S'')
    \arrow[d, "q_{S, S' \ctensor S''}"', two heads] \\
    S \ctensor (S' \ctensor S'')
    \arrow[rr, tail] &&
    T
  \end{tikzcd}
\]

If $\lambda$ is a natural isomorphism, then so is $\ell$. This does
not apply to $\rho$ and $\alpha$ unless $\E$ is also closed under
$S \tensor (-)$ for all $S \monoto T$.

\begin{theorem}\label{thm:slice-mon}
  Let $\T$ be a monoid in a skew monoidal category
  $\monoidal{C} = (\C, \tensorI, \tensor)$, and let $(\E, \M)$ be a
  factorization system on $\C$.  If $\E$ is closed under
  $({-}) \tensor S$ for every $\M$-subobject $S \monoto T$, then
  $\slice{\M}{\T} = (\slice{\M}{T}, \cI, \ctensor)$ is a skew monoidal
  category. If $\monoidal{C}$ is left-normal, then
  so is $\slice{\M}{\T}$.  If $\E$ is also closed
  under $S \tensor ({-})$ for every $S \monoto T$ and
  $\monoidal{C}$  is monoidal, then
  $\slice{\M}{\T}$ is monoidal.
\end{theorem}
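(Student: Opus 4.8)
The organizing principle I would use is that $\slice{\M}{T}$ is a \emph{reflective} subcategory of the slice $\slice{\C}{T}$, which already carries the skew monoidal structure recalled above. The reflector $R : \slice{\C}{T} \to \slice{\M}{T}$ sends an object $(X, f)$ to the $\M$-part $(S, m)$ of its $(\E,\M)$-factorization $X \epito S \monoto T$, with reflection unit the $\E$-morphism $X \epito S$; orthogonality of $\E$ against $\M$ is exactly the adjunction transposition. By construction $\cI = R(\tensorI, \eta)$ and $(S,s) \ctensor (S',s') = R\big((S,s) \tensor (S',s')\big)$, and the structural maps $\ell, r, a$ arise as the comparison maps that $R$ induces from $\lambda, \rho, \alpha$. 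The whole statement is then an instance of the principle that a reflective localization inherits (skew) monoidal structure when the tensor is compatible with the reflection; I would nonetheless verify it by hand, since the hypotheses on $\E$ are precisely what make the compatibility work, one side of the tensor at a time.

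The single technical tool driving every step is the \emph{uniqueness} half of orthogonality. Each object $S \ctensor S'$ comes equipped with an $\E$-quotient $q_{S,S'} : S \tensor S' \epito S \ctensor S'$ and an $\M$-mono into $T$, and likewise $\cI$ with $q : \tensorI \epito \cI$. Consequently two morphisms out of $S \ctensor S'$ (or out of $\cI$) into any $\M$-subobject of $T$ coincide as soon as (i) they agree after precomposition with $q_{S,S'}$ (resp.\ $q$), and (ii) they have the same $\M$-leg to $T$: both are then diagonal fillers of one orthogonality square, hence equal. This reduces every equation I must prove to two checks ``upstairs'' in $\C$ — the first a computation with $\lambda, \rho, \alpha, \mu$, the second a comparison of $\M$-legs.

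With this lemma in hand I would proceed in the expected order. First, bifunctoriality of $\ctensor$: that $f \ctensor f'$ respects identities and composition follows because both sides of each equation restrict along the relevant $q$ to $\tensor$-images of $f, f'$ and share an $\M$-leg, so the lemma applies; here I use that $\E$ is closed under composition and under $(-)\tensor S$ so that the relevant top edges are again in $\E$. Next, each of $\ell, r, a$ is a morphism of $\slice{\M}{T}$ by the commutativity of its defining square, and its naturality square is verified by restricting along $q$ and invoking naturality of $\lambda, \rho, \alpha$ together with functoriality of $\tensor$. Finally the five skew axioms (m1)--(m5): for each, both composites land in an $\M$-subobject of $T$, so by the lemma it suffices to check equality of $\M$-legs — which holds because the same axiom holds in $\slice{\C}{T}$ — and equality after restriction along the appropriate $q$, a diagram chase in $\C$ using the monoid laws of $\T$ and the skew axioms of $\monoidal{C}$.

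For the normality refinements I would argue that each structural map is the \emph{unique comparison isomorphism} between two $(\E,\M)$-factorizations of one morphism into $T$, whenever the relevant unitor or associator is invertible and $\E$ is closed under the side of the tensor occurring in its top edge. Concretely, if $\lambda$ is iso then $\lambda_S$ is the $\E$-part of one factorization of $\tensorI \tensor S \to T$, while $q_{\cI,S}\compose(q \tensor S)$ — in $\E$ by the basic closure hypothesis — is the $\E$-part of another; the defining square says these factor the same map, so the induced iso of $\M$-parts is exactly $\ell$, whence $\ell$ is invertible. The maps $r$ and $a$ carry $S \tensor q$ and $S \tensor q_{S',S''}$ in their top edges, so the analogous argument additionally needs $\E$ closed under $S \tensor (-)$; granting that and $\monoidal{C}$ monoidal (so $\rho, \alpha$ are iso) makes $r$ and $a$ invertible and $\slice{\M}{\T}$ monoidal. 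I expect the associativity pentagon (m5) to be the main obstacle: it threads four separate factorizations, and the delicate point is the bookkeeping that guarantees each composite appearing as a top edge genuinely lies in $\E$ — exactly where closure of $\E$ under composition and under $(-)\tensor S$ is invoked repeatedly — so that both the uniqueness lemma and uniqueness of factorizations may be applied.
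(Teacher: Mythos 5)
Your proposal is correct and follows essentially the same route as the paper's proof in Appendix~\ref{sec:proofs}: the unit and tensor are obtained by $(\E,\M)$-factorizing Kelly's skew monoidal structure on $\slice{\C}{T}$, the structural morphisms arise as diagonal fill-ins (your ``comparison maps induced by the reflector''), the axioms (m1)--(m5) are verified exactly by your uniqueness-of-the-diagonal lemma (checking agreement after precomposition with the relevant $\E$-morphism plus agreement of $\M$-legs), and the normality claims follow from uniqueness of factorizations just as you describe. The reflective-subcategory packaging is a pleasant conceptual gloss, but the computational content is identical to the paper's argument.
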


\begin{proof}
  See Appendix~\ref{sec:proofs}.
\end{proof}  

\begin{remark}
  Closure of $\E$ under $T \tensor ({-})$ does not in general imply
  closure of $\E$ under $S \tensor ({-})$ for $S \monoto T$.
  Consider the factorization system
  $(\E, \M) = (\text{componentwise surjective}, \text{componentwise full})$ on $[\Poset, \Poset]$, with
  composition as $\tensor$.
  For an endofunctor $F : \Poset \to \Poset$, closure of $\E$
  under $F \tensor ({-})$ amounts to closure of $\Surj$ under $F$.
  The class $\Surj$ is closed under $\exp{V}{({-})}$ exactly when
  $V$ is discrete.
  Hence, while this property holds for $\exp{\{0,1\}}{({-})}$, it does
  not hold for $\exp{\{0 \leq 1\}}{({-})} \monoto \exp{\{0,1\}}{({-})}$.

  There are cases in which $\E$ is closed under $S \tensor {({-})}$ for
  all $S \monoto T$ even if $\E$ is not closed under
  $F \tensor {({-})}$ for general $F$:
  for example every $\M$-subobject of the endofunctor $M \times ({-})$
  on $\Poset$ has
  the form $S \times ({-})$ for some $S \monoto M$, and functors of the
  form $S \times ({-})$ send surjections to surjections.
\end{remark}

Our task is now to show that the skew monoidal category
$\M/\T = (\M/T, \cI, \ctensor)$ forms the canonical grading
$(\M/\T, \lax{T}_{\M}, \slicemor{\M}{\T})$ of the monoid $\T$ when $\E$ is closed
under $S \tensor ({-})$ for each $S \monoto T$.
The lax monoidal functor
$\lax{T}_{\M} : \slice{\M}{\T} \to \monoidal{C}$ is given on
objects by
\[
  \slicedom{\M}{T} (S, s : S \monoto T) = S
\]
and has as unit and multiplication the $\E$-morphisms from the
construction of $\cI$ and $\ctensor$:
\[
  \begin{tikzcd}
    \tensorI \arrow[r, "q", two heads] & \slicedom{\M}{T} \cI
  \end{tikzcd}
  \qquad
  \begin{tikzcd}
    \slicedom{\M}{T} (S, s) \tensor \slicedom{\M}{T} (S', s')
    \arrow[r, "q_{S, S'}", two heads] &
    \slicedom{\M}{T} ((S, s) \ctensor (S', s'))
  \end{tikzcd}
\]
That this is lax monoidal is immediate from the definition of the
structural morphisms of $\slice{\M}{\T}$.
Finally, the monoidal transformation
$\slicemor{\M}{\T} : \slicedom{\M}{\T} \tto \T$ is given by
$\slicemor{\M}{\T}_{(S, s)} = s$.
Monoidality of $\slicemor{\M}{\T}$ is immediate from the definitions of
$\cI$ and $\ctensor$.
Hence $(\slice{\M}{\T}, \slicedom{\M}{\T}, \slicemor{\M}{\T})$ is a grading of $\T$.
Canonicity is the following theorem.

\begin{theorem}\label{thm:canonical-grading}
  Let $\T$ be a monoid in a skew monoidal category $\monoidal{C} = (\C, \tensorI, \tensor)$,
  and let $(\E, \M)$ be a factorization system on $\C$ such that
  $\E$ is closed under $({-}) \tensor S$ for each $\M$-subobject $S$ of
  $T$.
  The grading $(\slice{\M}{\T}, \slicedom{\M}{\T}, \slicemor{\M}{\T})$ is canonical
  in the sense that it is the pseudoterminal object of
  $\Grade{\M}{\T}$.
  Explicitly, for every $\M$-grading $(\monoidal{G}, \lax{G}, g)$ of the
  monoid $\T$:
  \begin{itemize}
    \item there is a morphism $(\lax{F}, f) : (\monoidal{G}, \lax{G}, g) \to (\slice{\M}{\T}, \slicedom{\M}{\T}, \slicemor{\M}{\T})$,
  of $\M$-gradings of $\T$;
    \item this morphism is essentially unique in the sense that
      there is a natural assignment of an isomorphism
      $(\lax{F'}, f') \iso (F, f)$ to every
      $(\lax{F'}, f') : (\monoidal{G},\lax{G},g) \to (\slice{\M}{\T}, \slicedom{\M}{\T}, \slicemor{\M}{\T})$.
    \end{itemize}
\end{theorem}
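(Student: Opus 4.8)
The plan is to lift the proof of \cref{thm:canonical-grading-object} to the monoidal setting, producing the missing coherence data from the factorization system and relying on \cref{thm:slice-mon} for the skew monoidal structure of $\slice{\M}{\T}$. For existence I would take the same underlying morphism of gradings as in \cref{thm:canonical-grading-object}: the functor $F$ with $Fd = (Gd, g_d)$ and $Fh = Gh$, together with the identity components $f_d = \id_{Gd}$ witnessing $\slicedom{\M}{\T} \fcomp \lax{F} \iso \lax{G}$. What must be added is a lax monoidal structure on $\lax{F}$, after which one checks that $f$ is a \emph{monoidal} isomorphism and that the grading-morphism equation $\slicemor{\M}{\T}_{Fd} \compose f_d = g_d$ holds (the latter is immediate, since $\slicemor{\M}{\T}_{Fd} = g_d$ exactly as in the object case).

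The unit $\eta^{\lax{F}} : \cI \to F\gI$ and multiplication $\mu^{\lax{F}}_{d,d'} : Fd \ctensor Fd' \to F(d \gtensor d')$ of $\lax{F}$ are defined as the unique diagonal fillers of two orthogonality squares. For the unit, the square has top edge the defining $\E$-morphism $q : \tensorI \epito \cI$, left edge the unit $\eta^{\lax{G}} : \tensorI \to G\gI$ of $\lax{G}$, and right/bottom edges the $\M$-subobjects $\cI \monoto T$ and $g_\gI : G\gI \monoto T$; it commutes because $g$ is monoidal, so both legs equal $\eta$. For the multiplication, the square has top edge the defining $\E$-morphism $q_{Gd, Gd'}$ of the tensor from \cref{thm:slice-mon}, left edge $\mu^{\lax{G}}_{d,d'}$, and right/bottom the relevant $\M$-subobjects, and it commutes because both legs equal $\mu \compose (g_d \tensor g_{d'})$ (on one side by monoidality of $g$, on the other by the definition of $\ctensor$). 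Existence and uniqueness of the fillers are guaranteed since the top morphisms lie in $\E$ and the bottom morphisms lie in $\M$. By construction the lower triangles say that $\eta^{\lax{F}}, \mu^{\lax{F}}_{d,d'}$ are morphisms of $\slice{\M}{T}$, while the upper triangles record the underlying equations
\[
  \slicedom{\M}{\T}(\eta^{\lax{F}}) \compose q = \eta^{\lax{G}}
  \qquad
  \slicedom{\M}{\T}(\mu^{\lax{F}}_{d,d'}) \compose q_{Gd, Gd'} = \mu^{\lax{G}}_{d,d'}
\]

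The engine for all remaining verifications is a cancellation principle that is immediate from uniqueness of diagonal fillers: every object of the form $\cI$ or $S \ctensor S'$ carries a canonical $\E$-morphism $q$ out of $\tensorI$, respectively $S \tensor S'$, so two parallel morphisms of $\slice{\M}{T}$ with such a domain that agree after precomposition with $q$ are equal. Using this, naturality of $\mu^{\lax{F}}$ and the lax monoidal coherence laws for $\lax{F}$ reduce, after precomposing with the appropriate $q$-morphisms, to the corresponding coherence laws for $\lax{G}$ together with the defining equations of the structural isomorphisms $\ell, r, a$ of $\slice{\M}{\T}$ established in \cref{thm:slice-mon}; and monoidality of $f = \id$ is precisely the pair of upper-triangle equations above. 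For essential uniqueness I would copy the assignment from \cref{thm:canonical-grading-object}, sending $(\lax{F'}, f')$ to the 2-cell $\beta$ with $\beta_d = f'_d$ and its inverse with components $(f'_d)^{-1}$; these are natural in $(\lax{F'}, f')$, mutually inverse, and satisfy the 2-cell equation $f \compose (\slicedom{\M}{\T} \fcomp \beta) = f'$. The one genuinely new obligation is that $\beta$ is a \emph{monoidal} transformation $\lax{F'} \natto \lax{F}$, which again falls to the cancellation principle: since $f'$ is monoidal and $f = \id$ is monoidal, $\beta_d = f'_d$ intertwines the unit and multiplication structure maps of $\lax{F'}$ and $\lax{F}$ once the $q$-morphisms are cancelled.

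The main obstacle is bookkeeping rather than conceptual: assembling the coherence and monoidality diagrams in $\slice{\M}{T}$ and reducing each to a known identity via the cancellation principle. The point that makes this routine is that every morphism in sight---the structure maps of $\lax{F}$, the structural isomorphisms of $\slice{\M}{\T}$, and the components $\beta_d$---is pinned down by its composite with an $\E$-morphism, so no step requires more than transporting an already-established identity for $\lax{G}$, for $g$, or from \cref{thm:slice-mon} across the factorization.
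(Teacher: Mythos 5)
Your proposal is correct and follows essentially the same route as the paper: the same underlying $(F,f)$ from \cref{thm:canonical-grading-object}, the lax monoidal structure on $\lax{F}$ obtained as unique diagonal fillers of the same two orthogonality squares, and the same 2-cell $\beta_d = f'_d$ whose monoidality is checked by showing both sides are the unique diagonal of a common square. Your explicitly stated cancellation principle is exactly the device the paper uses implicitly, and your justification for why the squares commute (monoidality of $g$ plus the definition of $\ctensor$) is if anything slightly more careful than the paper's.
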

\begin{proof}
  We have done most of the proof already as
  \cref{thm:canonical-grading-object}; we fill in the remaining parts.
  Recall from there that we define
  \[
    Fd = (Gd, g_d) \quad Fh = Gh
    \qquad
    f_d = \id_{Gd}
  \]
  We make $F$ into a lax monoidal functor by using the unique diagonals
  of the following squares as the unit and multiplication. The squares
  commute because $\lax{G}$ is lax monoidal.
  \[
    \begin{tikzcd}[column sep=large]
      \tensorI
      \arrow[r, "q", two heads]
      \arrow[d, "\eta"'] &
      \cI
      \arrow[d, tail]
      \arrow[ld, "\eta"{description}, dashed] \\
      G \gI
      \arrow[r, "g_{\gI}"', tail] &
      T
    \end{tikzcd}
    \qquad
    \begin{tikzcd}
      G d \tensor Gd'
      \arrow[r, "q_{Gd, Gd'}", two heads]
      \arrow[d, "\mu_{d, d'}"'] &
      G d \ctensor Gd'
      \arrow[d, tail]
      \arrow[ld, "\mu_{d,d'}"{description}, dashed] \\
      G(d \gtensor d')
      \arrow[r, "g_{d \gtensor d'}"', tail] &
      T
    \end{tikzcd}
  \]
  This definition immediately implies that $f$ is monoidal, and hence
  that $(\lax F, f)$ is a morphism of $\M$-gradings of $\T$.
  Finally, given $(\lax{F'}, f')$, we show that $\beta_d = f'_d$ defines an
  isomorphism $\beta : (\lax{F'}, f') \iso (\lax{F}, f)$.
  For this it remains to show that $\beta$ is monoidal (it follows
  automatically that the inverse $\beta^{-1}$ is monoidal).
  For compatibility with the multiplications, this amounts to showing
  that the square on the left below commutes.
  For this it is enough to show both paths in that square provide
  the unique diagonal of the square on the right, and this follows from
  the fact that $f'$ is monoidal.
  \[
    \begin{tikzcd}[row sep=large]
      F' d \ctensor F' d'
      \arrow[r, "f'_d \ctensor f'_{d'}"]
      \arrow[d, "\mu_{d, d'}"'] &
      G d \ctensor G d'
      \arrow[d, "\mu_{d, d'}"] \\
      F' (d \gtensor d')
      \arrow[r, "f'_{d \gtensor d'}"'] &
      G (d \gtensor d')
    \end{tikzcd}
    \qquad
    \begin{tikzcd}[row sep=small]
      F' d \tensor F' d'
      \arrow[r, "q_{F'd, F'd'}", two heads]
      \arrow[d, "f'_d \tensor f'_{d'}"'] &
      F' d \ctensor F' d'
      \arrow[dd, tail]
      \arrow[ldd, dashed] \\
      G d \tensor Gd'
      \arrow[d, "\mu_{d, d'}"'] \\
      G (d \gtensor d')
      \arrow[r, "g_{d \ctensor d'}"', tail] &
      T
    \end{tikzcd}
  \]
  Compatibility with the units is similar.
\end{proof}

\begin{example}\label{ex:writer-monoidal}
  Let $(M, \eps, \cdot)$ be a monoid in the cartesian monoidal
  category $\Set$, and let $\T$ be the corresponding
  writer monad, which has endofunctor
  $
    TX = M \times X
  $,
  unit
  $
    \eta_X x = (\eps, x)
  $,
  and multiplication
  $
    \mu_X (z, (z', x)) = (z \cdot z', x)
  $.
  Then $\T$ is a monoid in the monoidal category of endofunctors on
  $\Set$, with functor composition.
  Consider the factorization system $(\E, \M)$ on $[\Set, \Set]$ in
  which $\E$ (respectively $\M$) is componentwise surjective (resp.\
  injective) natural transformations.
  The class $\E$ is closed under functor composition on both sides, so
  $\slice{\M}{\T}$ is monoidal and provides the canonical grading of
  $\T$.
  We show in \cref{ex:writer-functor} that $\M$-subobjects of
  $T$ are equivalently subsets $\Sigma \subseteq M$.
  Under this equivalence, the monoidal structure on
  $\slice{\M}{\T}$ is given by
  $
    \cI = \{\eps\}
  $
  and
  $
    \Sigma \ctensor \Sigma'
      = \{z \cdot z' \mid z \in \Sigma, z' \in \Sigma'\}
  $.
  The graded monad $\slicedom{\M}{\T}$ is given by
  \[
    \slicedom{\M}{T} \Sigma = \Sigma \times X
    \qquad
    \eta_X x = (\eps, x)
    \qquad
    \mu_{\Sigma, \Sigma', X} (z, (z', x)) = (z \cdot z', x)
  \]
\end{example}

\begin{example}\label{ex:state-monoidal}
  We show in \cref{ex:state-functor} that, when $\M$ is componentwise
  injective natural transformations and $T = \exp{V}{V \times ({-})}$, the
  objects of $\slice{\M}{T}$ are equivalently subsets
  $\Sigma \subseteq (\exp{V}{V}) \times \Equiv V$ satisfying a closure
  condition.
  When $\T$ is the state monad, these form a monoidal category $\M/\T$,
  and the graded monad $\slicedom{\M}{\T}$ has underlying functor
  \[
    \slicedom{\M}{T} \Sigma X = \{ f : V \to V \times X \mid
      \exists (p, R) \in \Sigma.\,
        p = \pi_1 \compose f \wedge
        (\pi_2 \compose f)~\text{respects}~R\}
  \]
  \Cref{ex:state-grading} provides another grading of $\T$, in which the
  grades are subsets of $\{\mathsf{get}, \mathsf{put}\}$.
  By \cref{thm:canonical-grading}, we obtain a morphism $(\lax{F}, f)$ of
  gradings.
  Under the characterization of grades as subsets $\Sigma$, the
  underlying functor $F$ sends $e \subseteq \{\mathsf{get},
  \mathsf{put}\}$ to $Fe \subseteq (\exp{V}{V}) \times \Equiv V$ as follows:
  \begin{gather*}
    F\emptyset = \{(\id_V, V \times V)\}
    \qquad
    F\{\mathsf{get}, \mathsf{put}\} = (\exp{V}{V}) \times \Equiv V
    \\
    F\{\mathsf{get}\} = \{(\id_V, R) \mid R \in \Equiv V\}
    \qquad
    F\{\mathsf{put}\} = \{
      (p, V \times V) \mid p~\text{is a constant function or}~\id_V\}
  \end{gather*}
\end{example}

\section{Canonical grading by sets of shapes}
\label{sec:cartesian-grading}
When assigning grades to computations $t \in TX$, where $\T$ is a monad
on $\Set$, we are often interested only in the \emph{shape} of the
computation.
A \emph{shape} is an element of $T1$, where $1$ is the one-element set;
and the shape of the computation $t \in TX$ is $T !\,t \in T1$, where
$!$ is the unique function $X \to 1$.
A grade in this case is a subset $e \subseteq T1$ of the set of
shapes, and a computation has grade $e$ when its shape $T!\,t$ is in
$e$.

More generally, if $\T$ is a monad on a category $\D$ with a terminal
object $1$, then the object of shapes is $T1$.
Given a class $\M$ of morphisms of $\D$, we can consider grading by
$\M$-subobjects of $T1$.
We show in this section that these grades can be considered canonical,
using a suitable class $\M'$ of morphisms of $[\D, \D]$.

\begin{definition}
  Let $\A$ be a category.
  A natural transformation $f : F \natto G : \A \to \D$ is
  \emph{cartesian} if all of its naturality squares are pullbacks.
  If $\A$ has pullbacks, then a functor $F : \A \to \D$ is
  \emph{cartesian} when it preserves pullbacks.
\end{definition}

\begin{lemma}\label{lemma:partly-cartesian-factorizations}
  Let $(\E, \M)$ be a factorization system on a category $\D$ with
  pullbacks, and let $\A$ be a category with a terminal object.
  Then we have a factorization system $(\E', \M')$ on $[\A, \D]$ as
  follows:
  \begin{align*}
    \E' &= \textup{natural transformations}~e
      ~\textup{such that}~e_1 \in \E
    \\
    \M' &= \textup{cartesian natural transformations}~m
      ~\textup{such that}~m_1 \in \M
  \end{align*}
  For every functor $G : \A \to \D$, there is an equivalence of
  categories $\slice{\M'}{G} \equiv \slice{\M}{G1}$.
\end{lemma}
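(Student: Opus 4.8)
The plan is to prove the two assertions in turn: that $(\E', \M')$ is a factorization system on $[\A, \D]$, and then the equivalence $\slice{\M'}{G} \equiv \slice{\M}{G1}$. Write $!_A : A \to 1$ for the unique map to the terminal object. Two standard facts about cartesian natural transformations drive everything: they contain the isomorphisms and are closed under composition (by the pasting lemma for pullbacks), and if $m : S \natto G$ is cartesian then the naturality square at $!_A$ exhibits $m_A$ as a pullback of $m_1$ along $G !_A$, so that the pair $(S, m)$ is recovered up to isomorphism from the single morphism $m_1 : S1 \to G1$ via $SA \iso GA \times_{G1} S1$.

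Granting these, closure of $\E'$ and $\M'$ under composition and under isomorphisms is routine, using $(g \compose f)_1 = g_1 \compose f_1$. For factorizations, given $f : F \natto G$ I would first factor the single component as $F1 \xto{e_1} C \xto{m_1} G1$ in $\D$ with $e_1 \in \E$ and $m_1 \in \M$, then define $S$ by pulling $m_1$ back along $G !_A$ for each $A$; coherent choices of pullback make $S$ a functor, the projections $m_A : SA \to GA$ a cartesian natural transformation, and the component at $1$ equal to $m_1 \in \M$, so $m \in \M'$. The morphism $e : F \natto S$ is defined componentwise by the pullback universal property: $e_A$ is the unique map whose projections to $GA$ and to $C$ are $f_A$ and $e_1 \compose F !_A$, which are compatible because $m_1 \compose e_1 = f_1$ and $f$ is natural. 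It then remains to check $m \compose e = f$, naturality of $e$, and that $(e)_1 = e_1 \in \E$, so that $e \in \E'$; each of these reduces to comparing projections into a pullback.

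The step I expect to be the main obstacle is orthogonality, because for a general object $A$ neither $e_A$ nor the component of the $\M'$-map need lie in $\E$ or $\M$, so componentwise orthogonality in $\D$ is unavailable. The remedy is to exploit the pullback structure of the cartesian $\M'$-map. Given a commuting square with $e : P \natto Q$ in $\E'$ and $m : R \natto U$ in $\M'$, left edge $a : P \natto R$ and right edge $b : Q \natto U$, I would first obtain a unique diagonal $d_1 : Q1 \to R1$ at the object $1$ from orthogonality of $e_1 \in \E$ against $m_1 \in \M$ in $\D$. For general $A$, cartesianness gives $RA \iso UA \times_{U1} R1$, and I define $d_A$ as the unique map into this pullback with projections $b_A$ and $d_1 \compose Q !_A$; these are compatible since $m_1 \compose d_1 = b_1$ and $b$ is natural. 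Both triangle identities, naturality of $d$, and uniqueness then reduce, through the universal property of the pullback, to equalities of projections: the projection to $UA$ is controlled by the original square and the projection to $R1$ by $d_1$, while uniqueness at a general $A$ follows from uniqueness at $1$ once naturality forces $R !_A \compose d_A = d_1 \compose Q !_A$.

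Finally, for the equivalence, the comparison functor $\slice{\M'}{G} \to \slice{\M}{G1}$ sends $(S, m)$ to $(S1, m_1)$ and a morphism $\phi$ to $\phi_1$; it lands in $\slice{\M}{G1}$ exactly because $m_1 \in \M$. Its inverse up to isomorphism sends an $\M$-subobject $r : R \monoto G1$ to the cartesian subobject obtained by pulling $r$ back along each $G !_A$ --- the same construction used for factorizations. One composite is the identity up to the canonical isomorphism $G1 \times_{G1} R \iso R$, and the other up to the isomorphism $SA \iso GA \times_{G1} S1$ witnessing that $m$ is cartesian. This last isomorphism is the entire reason the cartesianness condition is built into $\M'$, and is the point I expect to be the most conceptually illuminating rather than technically hard.
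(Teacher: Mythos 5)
Your proposal is correct and follows essentially the same route as the paper: factorize the component at $1$, build $S$ and $m$ by pulling back along $G!_A$, obtain $e$ and the orthogonality diagonals as unique maps into pullbacks, and derive the equivalence from the fact that a cartesian $\M'$-subobject is determined up to isomorphism by its component at $1$. Your treatment of orthogonality is more detailed than the paper's one-line remark, but the underlying argument is the same.
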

Before giving the proof, we note that Kelly~\cite{kelly1992clubs}
considers $(\E', \M')$ in the case $(\E, \M) = (\Iso, \Mor)$.
\begin{proof}
  That $\E'$ and $\M'$ are closed under isomorphisms and composition is
  straightforward.
  To factorize a natural transformation $\tau : F \natto G$, we first
  factorize $\tau_1$ using $(\E,\M)$, as on the bottom of the following
  diagram.
\[
\begin{tikzcd}
  F X \arrow[d, "F!"']
       \arrow[r, "e_X", dashed] \arrow[rr, bend left=30, "\tau_X"] 
  & S X \arrow[r, "m_X"] \arrow[d]
    & G X \arrow[d, "G!"]  \\
  F 1 \arrow[r, "\underline{e}", two heads] \arrow[rr, bend right=30, "\tau_1"'] 
  & \underline{S} \arrow[r, "\underline{m}", tail]
    & G 1  
\end{tikzcd}  
\]
  We then factorize any component $\tau_X$ as on the top, by taking as
  $(SX, m_X)$ the pullback of $(\underline{S},\underline{m})$ along
  $G!$, and taking as $e_X$ the unique map from $F X$ to this pullback.
  The objects $SX$ form a functor using unique maps into pullbacks, and
  the morphisms $e_X$ and $m_X$ are natural transformations.
  When $X = 1$ we have $e_X \in \E$ and $m_X \in \M$ because the
  vertical morphisms in the diagram are all isomorphisms.
  Hence we have the required factorization of $\tau$ into $e \in \E'$ and
  $m \in \M'$.
  For orthogonality, unique diagonal fill-ins are constructed as unique
  maps to pullbacks.

The required equivalence of categories exists because every
$\M'$-subobject $m : S \monoto G$ is determined up to
isomorphism by the component $m_1$.
The latter is the corresponding object of $\slice{\M}{G1}$.
\end{proof}

\cref{lemma:partly-cartesian-factorizations} provides a construction of a factorization system
$(\E', \M')$ in particular on endofunctor categories $[\D, \D]$ when
$\D$ has pullbacks and a terminal object.
In this case the $\M'$-subobjects of $T$ are $\M$-subobjects of $T1$.
However, $\E'$ is often closed under neither $(-) \fcomp S$ nor $S \fcomp (-)$ for $S \monoto T$, and $\slice{\M'}{T}$ is neither left-skew nor right-skew monoidal
for a monad $\T$.
In the following example, left-skew monoidality fails, but we do get right-skew monoidality. 

\begin{example}
  Consider the factorization system $(\E, \M) = (\Surj, \Inj)$ on $\A = \D = \Set$. Surjections in $\Set$ are preserved by any functor $S$, therefore the class $\E'$ of the factorization system $(\E', \M')$ on $[\Set,\Set]$ is closed under $S \fcomp (-)$ for any $S$ (as $(S \fcomp e)_1 = S e_1$). Given a set monad $\T$, the category $\slice{\M'}{T}$ obtains a right-skew monoidal structure by the ``reversal'' of \cref{thm:slice-mon}.

  Let $\T$ be the state monad for a set of states $V$:
  \[
    T = \exp{V}{V \times ({-})}
    \qquad
    \eta_X\,x\,v = (v, x)
    \qquad
    \mu_X\,f\,v = g\,v'~~\text{where}~(v', g) = f\,v
  \]
  Since $T1 \iso \exp{V}{V}$, we have
  $\slice{\M'}{T} \equiv \slice{\Inj}{(\exp{V}{V})}$, so that the
  canonical grades are equivalently subsets of
  $\Sigma \subseteq \exp{V}{V}$, ordered by inclusion.
  A subset $\Sigma$ corresponds to the $\M'$-subobject
  $\mkS{\Sigma} \monoto T$ given by
  $
    \mkS{\Sigma} X
      = \{f : V \to V \times X \mid \pi_1 \compose f \in \Sigma\}
  $.
  Given such a subset, define
  $
    \mathrm{Cl}(\Sigma')
      = \{f : V \to V \times X \mid \forall v.\,\exists g \in \Sigma'.\,
            \pi_1 (f\,v) = g\,v\}
  $.
  The right-skew monoidal category of canonical grades $\Sigma$ has 
  unit $\cI = \{\id_V\}$ and tensor
  $
    \Sigma \ctensor \Sigma' = \Sigma \mathbin{\hat\compose} \mathrm{Cl}(\Sigma')
  $,
  where
  $
    \Sigma \mathbin{\hat\compose} \Sigma'
      = \{f' \compose f \mid f \in \Sigma, f' \in \Sigma'\}
  $.
  There is no left unitor for a left-skew monoidal structure, because
  $\cI \ctensor \Sigma' = \mathrm{Cl}(\Sigma')$ is not in general equal
  to $\Sigma'$.
\end{example}

The failure of left-skew monoidality in this example can be traced back to the
failure of $\E'$ to be closed under $({-}) \fcomp S$ for 
endofunctors $S \monoto T$.
We had no such problem for the componentwise lifting of
$(\E, \M)$.
When $(\E, \M)$ is \emph{stable} (\cref{def:stable} below) and one restricts
$[\A,\D]$ to cartesian natural transformations (and optionally further
also to cartesian functors), then the componentwise lifting actually
coincides with $(\E', \M')$, as we show in the next few lemmata.
This then provides sufficient conditions for
$\E'$ to be closed under $({-}) \fcomp S$.
We give an example in which these conditions are satisfied in
\cref{ex:list-cartesian} below, where we actually obtain a monoidal
structure on the category of grades.

\begin{lemma}\label{lemma:cartesian-subfunctors}
  If $\A$ has pullbacks, $m : S \natto G : \A \to \D$ is a cartesian
  natural transformation, and $G$ is cartesian, then $S$ is also
  cartesian.
\end{lemma}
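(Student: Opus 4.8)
The plan is to show directly that $S$ preserves a chosen pullback square. Fix a pullback square in $\A$ with apex $P$, maps $p_1 : P \to A$ and $p_2 : P \to B$, and maps $a : A \to C$ and $b : B \to C$ satisfying $a \compose p_1 = b \compose p_2$. Applying $S$, $G$, and the components of $m$ to this square assembles a commutative cube: the top face is the $S$-image of the square (which I must prove is a pullback), the bottom face is its $G$-image, and the four lateral faces are the naturality squares of $m$ at $p_1$, $p_2$, $a$, and $b$. By hypothesis the bottom face is a pullback, since $G$ is cartesian, and all four lateral faces are pullbacks, since $m : S \natto G$ is cartesian.

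The argument is then two applications of the pullback pasting lemma (the composition and cancellation properties of pullbacks). First I paste the lateral face at $p_1$ on top of the bottom $G$-face; both are pullbacks, so by the composition direction their vertical composite is a pullback. This composite is an outer rectangle whose top edge is $Sp_1 : SP \to SA$, whose bottom edge is $Gb : GB \to GC$, and whose left and right edges are $Gp_2 \compose m_P$ and $Ga \compose m_A$. Second I paste the as-yet-unknown top face on top of the lateral face at $b$, which is a pullback, producing an outer rectangle with the same top edge $Sp_1$, the same bottom edge $Gb$, and left and right edges $m_B \compose Sp_2$ and $m_C \compose Sa$.

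The crucial point is that these two outer rectangles coincide as commutative squares. Their top and bottom edges visibly agree, and their left and right edges agree by naturality of $m$ at $p_2$ and at $a$, namely $Gp_2 \compose m_P = m_B \compose Sp_2$ and $Ga \compose m_A = m_C \compose Sa$ — precisely the commutativity of the two remaining lateral faces of the cube. Hence the second outer rectangle is a pullback. I then invoke the cancellation direction of the pasting lemma: in the second pasting the bottom square (the lateral face at $b$) is a pullback and the outer rectangle is a pullback, so the top square is a pullback. The top square is exactly the $S$-image of the original pullback square, so $S$ preserves it, and $S$ is cartesian.

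The only genuine subtlety — and the step I would check most carefully — is the identification of the two outer rectangles, since everything hinges on matching their left and right edges. That match is nothing more than the naturality of $m$, so it is automatic. Once the identification is in place, both directions of the ordinary pasting lemma apply verbatim and no further computation is required; in particular I use the fact that $m$ is cartesian only through the two lateral faces at $p_1$ and $b$ being pullbacks, with the faces at $p_2$ and $a$ contributing merely their commutativity.
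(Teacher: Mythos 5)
Your proof is correct and is essentially the paper's own argument: both form the commutative cube whose $S$-face and $G$-face are connected by the naturality squares of $m$, and deduce that the $S$-face is a pullback from the $G$-face and the cartesianness of $m$. You merely make explicit the two pasting-lemma steps that the paper compresses into ``it follows,'' and your observation that only the lateral faces at $p_1$ and $b$ need to be pullbacks (the other two contributing only commutativity) is a small but accurate sharpening.
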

\begin{proof}
  Every pullback square in $\A$, as on the left below, induces a cube in
  $\D$, as on the right below.
  Four of the faces of this cube are pullbacks because $m$
  is cartesian, and the face on the right is a pullback because $G$ is
  cartesian.  It follows that the left face is also a pullback.
\[
  \begin{tikzcd}[remember picture]
    X & {X'} \\
    Y & {Y'}
    \arrow["x", from=1-1, to=1-2]
    \arrow["{f'}", from=1-2, to=2-2]
    \arrow["f"', from=1-1, to=2-1]
    \arrow["y"', from=2-1, to=2-2]
  \end{tikzcd}
  \begin{tikzpicture}[overlay, remember picture]
    \coordinate (NW) at (\tikzcdmatrixname-1-1);
    \coordinate (SW) at ([xshift=0.4em, yshift=-1.2em] NW);
    \coordinate (SE) at ([xshift=1.2em, yshift=-1.2em] NW);
    \coordinate (NE) at ([xshift=1.2em, yshift=-0.4em] NW);
    \draw (SW) -- (SE) -- (NE);
  \end{tikzpicture}
  \qquad\qquad
\begin{tikzcd}[row sep=tiny]
	SX && GX \\
	& {SX'} && {GX'} \\
	SY && GY \\
	& {SY'} && {GY'}
	\arrow["Sf"', from=1-1, to=3-1]
	\arrow["Gf"{description, pos=0.2}, from=1-3, to=3-3]
	\arrow["{m_X}", from=1-1, to=1-3]
	\arrow["{m_Y}"{description, pos=0.8}, from=3-1, to=3-3]
	\arrow["Gx", from=1-3, to=2-4]
	\arrow["{Gf'}", from=2-4, to=4-4]
	\arrow["Gy"{description}, from=3-3, to=4-4]
	\arrow["{m_{Y'}}"', from=4-2, to=4-4]
	\arrow["Sy"', from=3-1, to=4-2]
	\arrow["Sx"{description}, from=1-1, to=2-2]
	\arrow["{Sf'}"{description, pos=0.3}, from=2-2, to=4-2, crossing over]
	\arrow["{m_{X'}}"{description, pos=0.8}, from=2-2, to=2-4, crossing over]
\end{tikzcd}
 \vspace*{-6mm} \]
\end{proof}

\begin{lemma}\label{lemma:restrict-cartesian}
  If $\A$ has pullbacks, then any
  factorization system on $\CartNt{\A}{\D}$ (all functors, but only
  cartesian natural transformations) restricts to a factorization
  system on $\Cart{\A}{\D}$ (cartesian functors and cartesian natural
  transformations).
\end{lemma}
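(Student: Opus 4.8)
The plan is to invoke the standard principle that a factorization system on a category restricts to a full subcategory as soon as the intermediate objects of factorizations stay inside the subcategory, and to supply exactly that condition using \cref{lemma:cartesian-subfunctors}.

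First I would record that $\Cart{\A}{\D}$ is a \emph{full} subcategory of $\CartNt{\A}{\D}$: its objects are the cartesian functors, and its morphisms are all the cartesian natural transformations between them, which are precisely the $\CartNt{\A}{\D}$-morphisms whose domain and codomain happen to be cartesian. Writing $(\E, \M)$ for the given factorization system on $\CartNt{\A}{\D}$, I claim that taking as the two classes on $\Cart{\A}{\D}$ those $\E$- and $\M$-morphisms with both endpoints cartesian yields a factorization system.

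The closure axioms are immediate and inherited. Both classes contain every isomorphism of $\Cart{\A}{\D}$, since these are isomorphisms of $\CartNt{\A}{\D}$ and hence lie in $\E$ and in $\M$, and both are closed under composition because composition in the subcategory agrees with that in $\CartNt{\A}{\D}$. Orthogonality is inherited by fullness: for a commuting square in $\Cart{\A}{\D}$ with an $\E$-leg and an $\M$-leg, the unique diagonal fill-in already exists in $\CartNt{\A}{\D}$; its domain and codomain are the two cartesian functors occurring in the square, so by fullness the diagonal is automatically a morphism of $\Cart{\A}{\D}$, and its uniqueness there follows from uniqueness in $\CartNt{\A}{\D}$.

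The one nontrivial point, which I expect to be the sole obstacle, is that the factorization of a $\Cart{\A}{\D}$-morphism stays in $\Cart{\A}{\D}$, and this is exactly where \cref{lemma:cartesian-subfunctors} does the work. Given a cartesian natural transformation $\tau : F \natto G$ between cartesian functors $F$ and $G$, I would factor it in $\CartNt{\A}{\D}$ as $F \xto{e} S \xto{m} G$ with $e \in \E$ and $m \in \M$. Since every morphism of $\CartNt{\A}{\D}$ is a cartesian natural transformation, $m : S \natto G$ is cartesian; as $G$ is cartesian and $\A$ has pullbacks, \cref{lemma:cartesian-subfunctors} yields that $S$ is cartesian. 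Hence $S$ is an object of $\Cart{\A}{\D}$, and the factorization lives entirely in the subcategory. All remaining axioms are then formal consequences of fullness of the inclusion $\Cart{\A}{\D} \hookrightarrow \CartNt{\A}{\D}$, so the cartesianness of the intermediate object is the only thing genuinely requiring an argument.
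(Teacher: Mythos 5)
Your proposal is correct and follows the paper's proof exactly: the only substantive point is that the intermediate object $S$ of a factorization of a morphism of $\Cart{\A}{\D}$ is itself cartesian, which follows from \cref{lemma:cartesian-subfunctors} since $G$ and $m$ are cartesian. The additional details you supply about fullness and inheritance of the remaining axioms are the routine parts the paper leaves implicit.
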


\begin{proof}
  It suffices to show that $\Cart{\A}{\D}$ is closed under
  factorizations of cartesian natural transformations.
  This a consequence of the previous lemma:
  if $\tau : F \natto G$ is a morphism in $\Cart{\A}{\D}$ that
  factorizes as $(S, e, m)$, then $S$ is cartesian because $G$ and $m$
  are.
\end{proof}  

\begin{definition}\label{def:stable}
  If $\D$ has pullbacks, then a factorization system $(\E,\M)$ on $\D$
  is \emph{stable} when $\E$ is closed under pullbacks along arbitrary
  morphisms.
\end{definition}
(The analogous property for $\M$ is true in every factorization system.)

\begin{lemma}\label{lemma:compwise-lift-restrict}
  Assume that $\D$ has pullbacks, and let $(\E, \M)$ be a stable
  factorization system on $\D$.
  The componentwise lifting of $(\E,\M)$ to $[\A,\D]$ restricts to a
  factorization system on $\CartNt{\A}{\D}$.
\end{lemma}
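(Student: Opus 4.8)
The plan is to take the componentwise factorization system $(\textup{componentwise-}\E, \textup{componentwise-}\M)$ on $[\A, \D]$ as given and verify that it restricts to the wide subcategory $\CartNt{\A}{\D}$. Since this subcategory has all functors as objects but only cartesian natural transformations as morphisms, restricting the factorization system amounts to three things: (i) the two classes, intersected with the cartesian transformations, still contain the isomorphisms and are closed under composition; (ii) the componentwise factorization of a cartesian transformation has cartesian factors; and (iii) the unique diagonal fill-ins are again cartesian. Part (i) is immediate: a natural isomorphism has invertible naturality squares, hence is cartesian and componentwise in both $\E$ and $\M$; and cartesian transformations compose by the pasting lemma for pullbacks, as do $\E$- and $\M$-morphisms componentwise. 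The real content is (ii), and (iii) will follow from the same pasting argument.

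For (ii), let $\tau : F \natto G$ be cartesian with componentwise factorization $F \xto{e} S \xto{m} G$; I would first prove $m$ cartesian and then deduce the same for $e$. Fix $x : X \to X'$ in $\A$ and form the pullback $P$ of $m_{X'}$ along $Gx$, with projections $p_1 : P \to GX$ (in $\M$, since $\M$-morphisms are always stable under pullback) and $p_2 : P \to SX'$. Because $\tau$ is cartesian, its naturality square at $x$ exhibits $FX$ as the pullback of $\tau_{X'} = m_{X'} \compose e_{X'}$ along $Gx$; computing this pullback in two stages, the first stage is exactly $P$, and the second exhibits $FX$ as the pullback of $e_{X'}$ along $p_2$. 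This is the only point at which stability is used: since $e_{X'} \in \E$ and $(\E, \M)$ is stable, the induced projection $q : FX \to P$ again lies in $\E$, while $p_1 \compose q = \tau_X$. Thus $(P, q, p_1)$ is an $(\E, \M)$-factorization of $\tau_X$, and by uniqueness of factorizations up to unique isomorphism the canonical comparison $SX \to P$ is an isomorphism. Hence the naturality square of $m$ at $x$ is a pullback, so $m$ is cartesian.

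Once $m$ is known to be cartesian, $e$ is cartesian by the pasting lemma: its naturality square at $x$ sits atop the ($m$-)pullback square, with composite the ($\tau$-)pullback square, so the upper square is a pullback as well. This settles (ii). For (iii), given a square in $\CartNt{\A}{\D}$ whose horizontal edges are componentwise-$\E$ and componentwise-$\M$ and whose vertical edges $a, b$ are cartesian, the unique diagonal $d$ already exists in $[\A, \D]$, and it is cartesian by one further application of pasting: each naturality square of $d$ pastes beneath it the cartesian square of the lower ($\M$-)edge $m$, with composite the cartesian square of $b = m \compose d$, forcing the $d$-square to be a pullback. Therefore $d$ lives in $\CartNt{\A}{\D}$, with uniqueness inherited from $[\A, \D]$. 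The single genuine obstacle is establishing that $m$ is cartesian; I expect stability of $(\E, \M)$ to enter exactly once, in the two-stage pullback producing the alternative factorization of $\tau_X$, with everything else reducing to the pasting lemma and uniqueness of factorizations.
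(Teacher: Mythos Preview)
Your proof is correct and follows essentially the same route as the paper: form the pullback of the $\M$-component along $Gx$, use the pasting lemma together with stability to exhibit a second $(\E,\M)$-factorization of $\tau_X$, invoke uniqueness of factorizations to conclude that the naturality square of $m$ is a pullback, and then recover cartesianness of $e$ and of diagonal fill-ins from the two-out-of-three property of pullback squares. The paper's argument is the same, only slightly more compressed (it does not separate out your part~(i) explicitly).
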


\begin{proof}
  We need to check that, if a cartesian natural transformation
  $\tau : F \natto G$ factorizes as $(S, e, m)$ using $(\E,\M)$
  componentwise, then $e$ and $m$ are cartesian natural transformations.
  Given any $f : X \to Y$, we can consider the naturality square of
  $\tau$ for $f$, which is by assumption cartesian. It breaks into
  naturality squares of $e$ and $m$ for $f$. We can then pull back
  $(SY, m_Y)$ along $G f$ and be certain that the resulting morphism is
  in $\M$. The unique morphism from $FX$ to the
  pullback vertex $\bullet$ is a pullback of $(F Y, e_Y)$ along $Sf$ by the
  pullback lemma, and is therefore in $\E$ by stability.
  We therefore have two factorizations of $\tau_X$: one through $SX$ and
  one through $\bullet$.
  Factorizations are unique up to isomorphism, and hence the
  naturality squares of both $e$ and $m$ are pullbacks.
\[
\begin{tikzcd}[row sep=small]
  F X \arrow[dd, "Ff"']
    \arrow[r, "e_X", two heads] \arrow[rr, bend left=30, "\tau_X"]
    \arrow[dr, dashed, two heads]
  & S X \arrow[r, "m_X", tail] \arrow[dd, bend right=40, "Sf"']
    & G X \arrow[dd, "Gf"]  \\
  & \bullet \arrow[d] \arrow[ur, tail] & \\
  F Y \arrow[r, "e_Y", two heads] \arrow[rr, bend right=30, "\tau_Y"'] 
  & S Y \arrow[r, "m_Y", tail]
    & G Y  
\end{tikzcd}  
\]

We also need to check that the diagonal fill-ins of cartesian squares
built using $(\E,\M)$ componentwise are cartesian.
This holds because the pullback lemma provides a two-out-of-three
property for cartesian natural transformations: if $m \compose d$ and
$m$ are cartesian, then $d$ is also cartesian.
\end{proof}

\cref{lemma:compwise-lift-restrict} enables us to restrict the componentwise lifting of a
factorization system to $\CartNt{\A}{\D}$.
The following lemma enables us to restrict the factorization system
$(\E', \M')$ defined at the beginning of this section.

\begin{lemma}\label{lemma:cart-mono-restrict}
  Let $(\E, \M)$ be any factorization system on $[\A,\D]$. If all
  natural transformations in $\M$ are cartesian, then $(\E,\M)$
  restricts to a factorization system on $\CartNt{\A}{\D}$.
\end{lemma}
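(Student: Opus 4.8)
The plan is to verify directly that the two restricted classes form a factorization system on $\CartNt{\A}{\D}$. Writing $\mathcal{K}$ for the class of cartesian natural transformations, the restriction of $(\E, \M)$ consists of the classes $\E \cap \mathcal{K}$ and $\M \cap \mathcal{K}$; by the hypothesis that every $\M$-morphism is cartesian we have $\M \cap \mathcal{K} = \M$, so only the $\E$-class is genuinely cut down. The single tool I expect to use throughout is the two-out-of-three property for cartesian natural transformations coming from the pullback pasting lemma, exactly as in the proof of \cref{lemma:compwise-lift-restrict}: if $m \compose d$ and $m$ are cartesian, then $d$ is cartesian.

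First I would dispatch the closure axioms. Isomorphisms of $\CartNt{\A}{\D}$ are natural isomorphisms, whose naturality squares are pullbacks, so they are cartesian and lie in both $\E \cap \mathcal{K}$ and $\M$. Cartesian natural transformations are closed under composition by pasting of pullbacks, and $\E$ and $\M$ are closed under composition in $[\A, \D]$, so both restricted classes are closed under composition. For factorizations, I take a morphism $\tau : F \natto G$ of $\CartNt{\A}{\D}$, i.e.\ a cartesian $\tau$, and factor it in $[\A, \D]$ as $F \xto{e} S \xto{m} G$ with $e \in \E$ and $m \in \M$. By hypothesis $m$ is cartesian, and $\tau = m \compose e$ is cartesian, so the two-out-of-three property forces $e$ to be cartesian as well; hence both factors are morphisms of $\CartNt{\A}{\D}$ and the whole factorization lives in the subcategory. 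For orthogonality, any commuting square in $\CartNt{\A}{\D}$ with its $\E$-leg on top and its $\M$-leg on the bottom has a unique diagonal fill-in $d$ in the ambient category $[\A, \D]$; since $d$ postcomposed with the cartesian $\M$-leg equals a cartesian natural transformation, the same argument shows $d$ is cartesian, so it is a legitimate diagonal in $\CartNt{\A}{\D}$, and its uniqueness there is inherited from $[\A, \D]$.

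I do not expect a serious obstacle: the whole argument rests on the single observation that the hypothesis makes the $\M$-leg of each relevant composite automatically cartesian, after which the pullback pasting lemma supplies cartesianness of the remaining leg in both the factorization and the diagonal fill-in. The only point requiring mild care is to note that restricting the class of morphisms cannot destroy either the existence of factorizations or the uniqueness of diagonals, since every construction is carried out in $[\A, \D]$ and is merely checked to remain inside $\CartNt{\A}{\D}$.
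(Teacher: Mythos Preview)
Your proposal is correct and follows the same approach as the paper: the key step in both is that if $\tau = m \compose e$ with $\tau$ and $m$ cartesian, then $e$ is cartesian by the pullback lemma. The paper's proof records only this one observation (the factorization case), whereas you spell out closure under isomorphisms, composition, and the diagonal fill-in as well; these are routine and your treatment of them is fine.
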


\begin{proof}
  If $\tau = m \compose e$ and $\tau$ and $m$ are cartesian, then $e$
  is cartesian by the pullback lemma.
\end{proof}  

Lemmata \ref{lemma:compwise-lift-restrict} and \ref{lemma:cart-mono-restrict} provide two constructions of a factorization system on
$\CartNt{\A}{\D}$: the componentwise lifting of a factorization system
on $\D$, and the factorization system $(\E', \M')$ of
\cref{lemma:partly-cartesian-factorizations}.
We now show that the two coincide.

\begin{proposition}\label{prop:coincide}
  Let $(\E, \M)$ be a stable factorization system on a category $\D$ with
  pullbacks, and let $\A$ be a category with a terminal object.
  The factorization system $(\E',\M')$ on $[\A,\D]$ from
  \cref{lemma:partly-cartesian-factorizations} and the componentwise
  lifting of $(\E,\M)$ to $[\A,\D]$ both restrict to the same
  factorization system on $\CartNt{\A}{\D}$.
\end{proposition}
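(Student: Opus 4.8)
The plan is to exploit the fact that an orthogonal factorization system on a fixed category is completely determined by its right class: the left class is recovered as exactly the morphisms orthogonal to every morphism of the right class. By \cref{lemma:compwise-lift-restrict} and \cref{lemma:cart-mono-restrict} we already know that both the componentwise lifting of $(\E,\M)$ and the system $(\E',\M')$ of \cref{lemma:partly-cartesian-factorizations} restrict to genuine factorization systems on $\CartNt{\A}{\D}$ (the latter applies because every $\M'$-morphism is cartesian by construction). Hence it suffices to show that their right classes, intersected with the cartesian natural transformations, agree; the left classes then coincide automatically, and the two factorization systems are equal.

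Next I would spell out the two right classes explicitly. Restricted to $\CartNt{\A}{\D}$, the right class of $(\E',\M')$ consists of the (necessarily cartesian) natural transformations $m$ with $m_1 \in \M$, whereas the right class of the componentwise lifting consists of the cartesian $m$ such that $m_X \in \M$ for every object $X$ of $\A$. One inclusion is immediate: if every component $m_X$ lies in $\M$, then in particular $m_1 \in \M$.

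The substance of the argument is the reverse inclusion, and this is exactly where the terminal object of $\A$ and the closure of $\M$ under pullback (noted after \cref{def:stable}) enter. Given a cartesian $m : S \natto G$ with $m_1 \in \M$, I would consider, for each object $X$, the naturality square of $m$ along the unique morphism $!_X : X \to 1$. Since $m$ is cartesian this square is a pullback, so it exhibits $m_X$ as a pullback of $m_1$ along $G\,!_X$. As $\M$ is stable under pullback in any factorization system, $m_X \in \M$. Thus $m_1 \in \M$ forces every component into $\M$, the two right classes coincide, and the proposition follows.

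I expect the only genuinely delicate point to be the bookkeeping observation of the first paragraph, namely that it is enough to compare right classes rather than to verify both classes directly; once that is granted, the coincidence reduces to a single use of cartesianness together with the standard pullback-stability of $\M$. It is worth flagging that stability of $(\E,\M)$ plays no role in the coincidence itself and is needed only upstream, via \cref{lemma:compwise-lift-restrict}, to guarantee that the componentwise lifting restricts to $\CartNt{\A}{\D}$ in the first place.
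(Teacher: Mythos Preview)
Your argument is correct. Both proofs ultimately rest on the same observation---that for a cartesian natural transformation the component at $1$ determines all other components via pullback---but you deploy it differently from the paper. The paper argues symmetrically: it shows directly that, for cartesian $e$, having $e_1\in\E$ is equivalent to having every $e_X\in\E$ (this is where stability is invoked), and then says ``similarly for $\M'$''. You instead invoke the standard fact that an orthogonal factorization system is determined by either class, compare only the right classes, and appeal to the automatic pullback-stability of $\M$; this lets you avoid stability in the comparison step and confine its use to the upstream \cref{lemma:compwise-lift-restrict}, which you flag explicitly. The paper's route is marginally more self-contained (no appeal to the determinacy fact), while yours isolates more sharply which hypothesis is doing what.
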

\begin{proof}
  By stability of $(\E, \M)$, for each cartesian natural transformation
  $e$, having $e_1 \in \E$ is equivalent to having $e_X \in \E$ for all
  $X$.
  Hence when restricted to $\CartNt{\A}{\D}$, the $\E'$-morphisms are
  exactly the $\CartNt{\A}{\D}$-morphisms whose components are in $\E$,
  and similarly for $\M'$.
\end{proof}

When $\A$ has pullbacks it follows, using
\cref{lemma:restrict-cartesian}, that the two factorization systems also
restrict to the same factorization system on $\Cart{\A}{\D}$.
When $\A = \D$, the latter forms a monoidal category with functor
composition as tensor, and $\E'$ is closed under $({-}) \fcomp S$ for
every cartesian endofunctor $S$.
Hence we can construct canonical gradings of cartesian
monads (monoids in $\Cart{\D}{\D}$) by \cref{thm:canonical-grading}.
The grades of the canonical grading of a cartesian monad $\T$ are
equivalently $\M$-subobjects of $T1$, by
\cref{lemma:partly-cartesian-factorizations}.

\begin{example}\label{ex:list-cartesian}
  Let us return to the factorization system $(\E, \M) = (\Surj, \Inj)$ on $\A = \D = \Set$, which is stable. Let $(\E',\M')$ be the factorization system on $\Cart{\Set}{\Set}$ just discussed. Then $\E'$ is closed both under $(-) \fcomp S$ and under $S \fcomp (-)$ for any cartesian set functor $S$. Hence $\slice{\M'}{T}$ acquires a monoidal structure for any cartesian set monad $\T$.
  
  Let $\T$ be the list monad on $\Set$, so $TX$ is the set of lists over
  $X$, the unit is $\eta_X\,x = [x]$, and the multiplication is
  $\mu_X [\xs_1, \dots, \xs_n] = \xs_1 \append \cdots \append \xs_n$,
  where $(\append)$ is concatenation of lists.
  This monad is cartesian.
  There is an isomorphism $T1 \iso \Nat$, so shapes are equivalently
  natural numbers (corresponding to the length of the list).
  Then $\M'$-subobjects of $T$ are equivalently subsets
  $\Sigma \subseteq \Nat$.
  By the above, these form the canonical $\M'$-grading of $\T$.
  The monoidal structure on these subsets is given by
  \[\textstyle
    \cI = \{1\}
    \qquad
    \Sigma \ctensor \Sigma' = \{ \sum_{i = 1}^n m_i \mid n \in \Sigma, m_1,
    \dots, m_n \in \Sigma'\}
  \]
  The graded monad $\slicedom{\M'}{\T}$ is given on objects by
  $
    \slicedom{\M'}{\T} \Sigma\,X = \{ \xs \mid |\xs| \in \Sigma \}
  $,
  where $|\xs|$ is the length of $\xs$.
\end{example}

\section{Algebraic operations}\label{sec:algebraic-operations}

In models of computational effects, we usually do not just want a (strong)
monad $\T$; we also want to equip $\T$ with a collection of \emph{algebraic
operations} in the sense of Plotkin and Power~\cite{plotkin2003algebraic}.
The latter provide interpretations of the constructs that cause
the effects.
When modelling computations using a graded monad, we similarly want
algebraic operations for the graded monad; such a notion of algebraic
operation was introduced in \cite{KMUW:flepgm}.
In this section, we therefore investigate the problem of constructing
algebraic operations for the graded monad $\slicedom{\M}{\T}$, given algebraic
operations for the monad $\T$.

Throughout this section, we assume a monoidal category
$\monoidal{C} = (\C, \tensorI, \tensor)$ that has finite products, for
example, endofunctors on a category with finite products.
When we write $T^n$ below, we mean the product of $n$-many copies of
$T$.
We work only with normal (i.e., non-skew) monoidal categories in this
section.
The notion of algebraic operation for a graded monoid (e.g., a
graded monad) that we use below works for monoidal categories,
but the appropriate notion for skew monoidal categories would be more
complicated.
(It would use a list of grades $e_i$ instead of a single grade $e$ in the definition below.)
Hence when we consider the canonical gradings below, we work under the
assumption that they form a monoidal category (for example, when
$\E$ is closed under $\tensor$ in both arguments).

The following definition generalizes the notion of algebraic operation
for a monad to monoids.
\begin{definition}
  Let $\T = (T, \eta, \mu)$ be a monoid in $\monoidal{C}$.
  An $n$-ary \emph{algebraic operation} for $\T$, where $n$ is a natural
  number, is a morphism
  $
    \oper : T^n \to T
  $
  such that
  \[
    \begin{tikzcd}[column sep=large]
      T^n \tensor T
      \arrow[d, "\oper \tensor T"']
      \arrow[r, "\langle \pi_i \tensor T \rangle_i"] &
      (T \tensor T)^n
      \arrow[r, "\mu^n"] &
      T^n
      \arrow[d, "\oper"] \\
      T \tensor T \arrow[rr, "\mu"] &&
      T
    \end{tikzcd}
  \]
\end{definition}

\begin{definition}
  Let
  $\lax G = (G, \eta, \mu) : \monoidal{G} \to \monoidal{C}$
  be a $\monoidal{G}$-graded monoid in $\monoidal{C}$.
  A $(d_1, \dots, d_n; d')$-ary algebraic operation for $\lax{G}$, where
  $d_1, \dots, d_n, d' \in \monoidal{G}$, is a natural transformation
  $
    \goper_e : \prod_i G (d_i \gtensor e) \natto G (d' \gtensor e)
  $
  such that, for all $e, e' \in \G$,
  \[
    \begin{tikzcd}
      (\prod_i G(d_i \gtensor e)) \tensor Ge'
      \arrow[r, "\langle \pi_i \tensor Ge' \rangle_i"]
      \arrow[d, "\goper_e \tensor Ge'"'] &
      \prod_i (G(d_i \gtensor e) \tensor Ge')
      \arrow[r, "\prod_i \mu_{d_i \gtensor e, e'}"] &
      \prod_i G((d_i \gtensor e) \gtensor e')
      \arrow[r, "G \alpha"] &
      \prod_i G(d_i \gtensor (e \gtensor e'))
      \arrow[d, "\goper_{e \gtensor e'}"] \\
      G(d' \gtensor e) \tensor Ge'
      \arrow[rr, "\mu_{d' \gtensor e, e'}"] &&
      G((d' \gtensor e) \gtensor e')
      \arrow[r, "G \alpha"] &
      G(d' \gtensor (e \gtensor e'))
    \end{tikzcd}
  \]
\end{definition}

\begin{example}
  Let $\monoidal C$ be the cartesian monoidal category $\Set$.
  Then an $n$-ary algebraic operation for a monoid $\T$ is a function
  $\oper : T^n \to T$ such that the multiplication of the monoid
  distributes over $\oper$ from the right.
  For example, if $\T$ is natural numbers with ordinary multiplication,
  then $\phi (x_1, \dots, x_n) = x_1 + \cdots + x_n$ is an $n$-ary
  algebraic operation.
\end{example}

\begin{definition}
  Let $(\monoidal{G}, \lax{G}, g)$ be an $\M$-grading of a monoid $\T$,
  where $\M$ is a class of morphisms in $\C$.
  We say that a $(d_1, \dots, d_n; d')$-ary algebraic operation
  $\goper$ for $\lax{G}$ is a \emph{grading} of an $n$-ary algebraic
  operation $\oper$ for $\T$ when the following diagram commutes for all
  $e \in \monoidal{G}$.
  \[
    \begin{tikzcd}
      \prod_i G(d_i \gtensor e)
      \arrow[r, "\goper_e"]
      \arrow[d, "\prod_i g_{d_i \gtensor e}"'] &
      G(d' \gtensor e)
      \arrow[d, "g_{d' \gtensor e}", tail] \\
      T^n
      \arrow[r, "\oper"] &
      T
    \end{tikzcd}
  \]
\end{definition}

Suppose that $\T$ is a monoid in $\monoidal{C}$, and that
$(\E, \M)$ is a factorization system on $\C$ such that
$\E$ is closed under $({-}) \tensor S$ for all $S \monoto T$.
Then $\T$ has a canonical grading
$\slicedom{\M}{\T} : \slice{\M}{\T} \to \monoidal{C}$ by
\cref{thm:canonical-grading}.
Suppose in addition that the skew monoidal category $\slice{\M}{\T}$ is
actually monoidal (which is the case when $\E$ is closed also under
$S \tensor {({-})}$ for all $S \monoto T$).
We keep these assumptions without repeating them for the rest of this
section.

Our goal in the rest of this section is to show that we can assign
canonical grades to algebraic operations for $\T$.
To be more precise, let $\oper : T^n \to T$ is an $n$-ary algebraic
operation for $\T$, and let $R_1, \dots, R_n$ be a list of grades
($\M$-subobjects of $T$).
We show how to construct a grade $R'$ and an algebraic operation
\[\textstyle
  \goper : \prod_i \slicedom{\M}{\T} (R_i \ctensor {-}) \to \slicedom{\M}{\T} (R' \ctensor {-})
\]
of arity $(R_1, \ldots, R_n; R')$ for $\slicedom{\M}{\T}$,
such that $\goper$ grades $\oper$.
The grade $R'$ is in a sense canonical (see
\cref{thm:grading-operations} below), and in fact every component of
$\goper$ is in $\E$.

To do this, we make the following two further assumptions about $\E$ for
the rest of the section.
Firstly, we assume that $\E$ contains the canonical morphisms
$
  \langle \pi_i \tensor Y \rangle_i
    : (\prod_i X_i) \tensor Y \to \prod_i (X_i \tensor Y)
$.
This is the case in particular when $\tensor$ preserves finite products on
the left (because $\E$ contains all isomorphisms); when $\tensor$ is
composition of endofunctors this is automatically true.
Secondly, we assume that $\E$ is closed under finite products, i.e.\
that $\prod_i e_i : \prod_i X_i \to \prod_i Y_i$ is in $\E$ whenever all of the
morphisms $e_i : X_i \epito Y_i$ are in $\E$.
This is the case for all of the factorization systems we consider above.

The key lemma that enables us to construct $\goper$ is the following,
which characterizes algebraic operations for the canonical grading
$\slicedom{\M}{\T}$ of $\T$.

\begin{lemma}\label{lemma:canonical-operations}
  Let $\oper : T^n \to T$ be an $n$-ary algebraic operation for $\T$,
  and let $R_1, \dots, R_n, R'$ be $\M$-subobjects of $T$.
  There is a bijection between (1) morphisms $p : \prod_i R_i \to R'$ such
  that
  \[
    \begin{tikzcd}[row sep=small]
      \prod_i R_i \arrow[r, "p"] \arrow[d, tail] &
      R' \arrow[d, tail] \\
      T^n \arrow[r, "\oper"] &
      T
    \end{tikzcd}
  \]
  and (2) $(R_1, \dots, R_n; R')$-ary algebraic operations $\goper$ for
  $\slicedom{\M}{\T}$ that grade $\oper$.
\end{lemma}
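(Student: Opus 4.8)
The plan is to construct the two directions of the bijection explicitly and then check they are mutually inverse. The key observation is that an algebraic operation $\goper$ for $\slicedom{\M}{\T}$ of arity $(R_1,\dots,R_n;R')$ is a natural transformation $\goper_e : \prod_i \slicedom{\M}{T}(R_i \ctensor e) \natto \slicedom{\M}{T}(R' \ctensor e)$ subject to a compatibility law with the multiplication, whereas the data in (1) is a single morphism $p$ between the \emph{unit-graded} instances, i.e.\ essentially the component $\goper_{\cI}$ after identifying $R_i \ctensor \cI \iso R_i$ via the right unitor (which exists and is an isomorphism since $\slice{\M}{\T}$ is monoidal). So the heart of the argument is that an algebraic operation is determined by its behaviour at the unit grade, and that this determination is exactly mediated by the monoid structure.

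First I would treat the direction (2)$\Rightarrow$(1): given $\goper$, take $p$ to be $\goper_{\cI}$ transported along the unitor isomorphisms $R_i \ctensor \cI \iso R_i$ and $R' \ctensor \cI \iso R'$. The grading square for $\goper$ at $e = \cI$, together with naturality of $g$ and the unitor, yields precisely the commuting square in (1) (that $p$ factors $\oper$ through the $\M$-subobjects). Conversely, for (1)$\Rightarrow$(2), given $p$ I would define each component $\goper_e$ by factorization: the composite $\prod_i \slicedom{\M}{T}(R_i \ctensor e) \to \prod_i (R_i \tensor \slicedom{\M}{T} e)$ (using the assumed $\E$-morphism $\langle \pi_i \tensor Y\rangle_i$ and the $\ctensor$-structure maps $q$) followed by $p \tensor \slicedom{\M}{T} e$ and then the multiplication lands in $T$, and I factorize through $R' \ctensor e$ using orthogonality of $(\E,\M)$. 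Concretely, one builds the square whose top edge is the relevant $\E$-morphism (a composite of $q$'s and $\langle\pi_i\tensor Y\rangle_i$, which is in $\E$ by the two closure assumptions on $\E$) and whose right edge is the $\M$-mono $R'\ctensor e \monoto T$; the commuting outer square has a unique diagonal, and that diagonal is $\goper_e$. This simultaneously forces every component of $\goper$ to lie in $\E$.

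The steps after the construction are the verifications: naturality of $\goper$ in $e$, the algebraic-operation axiom (the big hexagonal diagram in the definition), and the grading square. Each of these follows from uniqueness of diagonal fill-ins: to prove two maps into an $\M$-subobject agree, it suffices to postcompose with the mono and check equality in $\C$, and the relevant $\E$-morphisms are epimorphisms in the orthogonality sense, so precomposing with them also reflects equality. The naturality and the axiom then reduce to the monoid laws for $\T$ (associativity and the interaction of $\mu$ with $\oper$ given by the algebraic-operation square for $\oper$) together with the explicit description of $\ctensor$ and $\al$ in $\slice{\M}{\T}$ from \cref{thm:slice-mon}. Finally, round-tripping: starting from $p$, building $\goper$, and extracting $\goper_{\cI}$ returns $p$ up to the unitor identifications (using law (m2)-type coherence, i.e.\ that the unit-grade instance of the factorization collapses to $p$); starting from $\goper$ and rebuilding is handled by showing the rebuilt components satisfy the same universal (diagonal) property as the original, hence coincide by uniqueness.

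The main obstacle I expect is the (1)$\Rightarrow$(2) direction, specifically verifying the algebraic-operation axiom for the constructed $\goper$. That diagram threads the grade $e'$ through the multiplication and the associator $\al$ of $\slice{\M}{\T}$, and since $\al$ is itself defined as a factorization diagonal, unwinding both sides into $\C$ and matching them requires carefully composing several $q$-squares and invoking the monoid associativity of $\T$ in the right order. The cleanest route is to show both legs of the axiom, when postcomposed with the defining mono of the target $\M$-subobject, equal a common morphism into $T$ built from $\mu$, $\oper$, and $p$; equality in $\C$ then lifts back through the mono, and the epi (diagonal-uniqueness) side ensures the lifted equality is the desired one. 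Establishing this common factorization through $T$ is the real computational content; everything else is bookkeeping with orthogonality.
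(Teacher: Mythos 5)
Your proposal is correct and follows essentially the same route as the paper: extract $p$ from $\goper_{\cI}$ via the right unitor, and conversely build each $\goper_S$ as the unique diagonal of the orthogonality square whose top edge is the $\E$-morphism $\bigl(\prod_i R_i\bigr) \tensor S \epito \prod_i (R_i \ctensor S)$ and whose bottom-right corner is $T$, with all verifications and the round-trip arguments reduced to uniqueness of diagonal fill-ins. The paper's proof is exactly this, plus an explicit diagram chase for the $p \mapsto \goper \mapsto p$ round trip.
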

\begin{proof}
  Given a morphism $p$ as in (1), the following square commutes because
  $\oper$ is algebraic, and the square hence has a unique diagonal
  $\goper_S$.
  Further applications of orthogonality show that $\goper$ is an algebraic
  operation.
  It is a grading of $\oper$ by definition.
  \[
    \begin{tikzcd}[row sep=small]
      (\prod_i R_i) \tensor S
      \arrow[r, "\langle \pi_i \tensor S \rangle_i", two heads]
      \arrow[d, "p \tensor S"'] &
      \prod_i (R_i \tensor S)
      \arrow[r, "\prod_i q_{R_i, S}", two heads] &
      \prod_i (R_i \ctensor S)
      \arrow[d, tail]
      \arrow[lldd, "\goper_S"{description}, dashed] \\
      R' \tensor S
      \arrow[d, "q_{R', S}"', two heads] &&
      T^n
      \arrow[d, "\oper"] \\
      R' \ctensor S
      \arrow[rr, tail] &&
      T
    \end{tikzcd}
  \]
  In the other direction, given $\goper$, we have a morphism
  $p$ as follows; this $p$ makes the diagram required for (1) commute
  because $\goper$ is a grading of $\oper$.
  \[\textstyle
    p : \prod_i R_i \xto{\prod_i r_{R_i}} \prod_i (R_i \ctensor \cI)
      \xto{\goper_{\cI}} R' \ctensor \cI \xto{r^{-1}_{R'}} R'
  \]
  From algebraicity of $\goper$ it follows that this $p$ makes the
  upper triangle of the above square commute and hence, by uniqueness of
  the diagonal, that $\goper$ is the only grading of $\oper$ that
  induces this $p$.
  The construction of $p$ from $\goper$ is therefore injective.
  The following diagram chase shows that constructing a new $p$ from the $\goper$ constructed from a given $p$ yields the same $p$, hence the constructions
  form a bijection. 
  \[
\begin{tikzcd}[row sep=0.8em, column sep=6em]
	{\prod_i R_i} \\
	& {(\prod_i R_i) \tensor \cI} & {\prod_i(R_i \tensor \cI)} & {\prod_i(R_i \ctensor \cI)} \\
	& {R' \tensor \cI} && {R' \ctensor \cI} \\
	{R'}
	\arrow["p"', from=1-1, to=4-1]
	\arrow["{\prod_i r_{R_i}}", curve={height=-12pt}, from=1-1, to=2-4]
	\arrow["{\psi_{\cI}}", from=2-4, to=3-4]
	\arrow["{r_{R'}}"', curve={height=12pt}, from=4-1, to=3-4]
	\arrow["{p \tensor \cI}"', from=2-2, to=3-2]
	\arrow["{\langle \pi_i \tensor \cI\rangle_i}"', from=2-2, to=2-3]
	\arrow["{\prod_i q_{R_i, \cI}}"', from=2-3, to=2-4]
	\arrow["{q_{R', \cI}}", from=3-2, to=3-4]
	\arrow["{(R' \tensor q) \compose \rho_{R'}}" near end, from=4-1, to=3-2]
	\arrow["{(\prod_i R_i \tensor q) \compose \rho_{\prod_i R_i}}"' very near end, from=1-1, to=2-2]
\end{tikzcd}
 \vspace*{-6mm} \]
\end{proof}

Now given an $n$-ary algebraic operation $\oper$ for $\T$ and a fixed
tuple $R_1, \dots, R_n$ of $\M$-subobjects of $T$, we construct the
canonical $R'$ by factorizing
\begin{tikzcd}[cramped, column sep=1.6em]
  \prod_i R_i \arrow[r, tail] &
  T^n \arrow[r, "\oper"] &
  T
\end{tikzcd}
as
\begin{tikzcd}[cramped, column sep=1.6em]
  \prod_i R_i \arrow[r, "p", two heads] &
  R' \arrow[r, tail] &
  T
\end{tikzcd}.
The preceding lemma then provides us with an
$(R_1, \dots, R_n; R')$-algebraic operation $\goper$ for $\slicedom{\M}{\T}$.

\begin{theorem}\label{thm:grading-operations}
  Let $\phi : T^n \to T$ be an $n$-ary algebraic operation for $\T$.
  \begin{enumerate}
    \item The construction above defines an $(R_1, \dots, R_n; R')$-ary
      algebraic operation $\goper$ for $\slicedom{\M}{\T}$, and $\goper$ grades
      $\oper$.
      Every component $\goper_S$ is in $\E$.
    \item For any $\M$-subobject $R'' \monoto T$ and
      $(R_1,\dots,R_n; R'')$-ary algebraic operation $\goper'$ for
      $\slicedom{\M}{\T}$, such that $\goper'$ grades $\oper$,
      there is a unique $f : R' \to R''$ in $\slice{\M}{T}$
      such that $(f \ctensor S) \compose \goper_S = \goper'_S$ for
      all $S$.
  \end{enumerate}
\end{theorem}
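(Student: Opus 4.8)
The plan is to reduce the whole statement to \cref{lemma:canonical-operations} together with the cancellation and orthogonality properties of $(\E, \M)$. For part~(1), write $\iota : \prod_i R_i \monoto T^n$ for the product of the defining monos of the $R_i$, and factorize $\oper \compose \iota$ as an $\E$-morphism $p : \prod_i R_i \to R'$ followed by $s_{R'} : R' \monoto T$. By construction $p$ makes the square of condition~(1) of \cref{lemma:canonical-operations} commute, so the lemma immediately yields an $(R_1,\dots,R_n;R')$-ary algebraic operation $\goper$ for $\slicedom{\M}{\T}$ grading $\oper$. The one genuinely new claim is $\goper_S \in \E$. Reading off the construction diagram in the proof of \cref{lemma:canonical-operations}, $\goper_S$ is the unique diagonal with $\goper_S \compose e_S = q_{R', S} \compose (p \tensor S)$, where $e_S = (\prod_i q_{R_i, S}) \compose \langle \pi_i \tensor S\rangle_i$. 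Here $e_S \in \E$, since $\langle \pi_i \tensor S\rangle_i \in \E$ by assumption, each $q_{R_i, S} \in \E$, and $\E$ is closed under finite products; moreover $q_{R', S} \compose (p \tensor S) \in \E$, since $p \in \E$ and $\E$ is closed under $({-}) \tensor S$. As $\E$ has the right-cancellation property (if $g \compose e \in \E$ and $e \in \E$ then $g \in \E$, which follows from uniqueness of factorizations), the facts $\goper_S \compose e_S \in \E$ and $e_S \in \E$ force $\goper_S \in \E$.

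For part~(2), I would first run \cref{lemma:canonical-operations} backwards on the given $\goper'$ to obtain $p' : \prod_i R_i \to R''$ with $s_{R''} \compose p' = \oper \compose \iota$. Since also $s_{R'} \compose p = \oper \compose \iota$, the square with top edge $p$, left edge $p'$, right edge $s_{R'}$ and bottom edge $s_{R''}$ commutes, and orthogonality of $p \in \E$ against $s_{R''} \in \M$ supplies a unique $f : R' \to R''$ with $f \compose p = p'$ and $s_{R''} \compose f = s_{R'}$, the latter saying precisely that $f$ is a morphism in $\slice{\M}{T}$. To verify $(f \ctensor S) \compose \goper_S = \goper'_S$, I would show that both sides are diagonals of the single orthogonality square that defines $\goper'_S$ (top edge $e_S \in \E$, right edge $s_{R'' \ctensor S} \in \M$) and conclude by uniqueness of diagonals. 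This uses the two defining properties of $f \ctensor S$ as a diagonal in the construction of $\ctensor$, namely $(f \ctensor S) \compose q_{R', S} = q_{R'', S} \compose (f \tensor S)$ and $s_{R'' \ctensor S} \compose (f \ctensor S) = s_{R' \ctensor S}$; combined with $f \compose p = p'$ and the two triangles for $\goper_S$, they show that $(f \ctensor S) \compose \goper_S$ satisfies the same upper and lower triangle as $\goper'_S$.

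For uniqueness, I would observe that any $f$ satisfying $(f \ctensor S) \compose \goper_S = \goper'_S$ for all $S$ already satisfies $f \compose p = p'$: instantiating at $S = \cI$ and using the recovery formula $p = r^{-1}_{R'} \compose \goper_\cI \compose \prod_i r_{R_i}$ from \cref{lemma:canonical-operations} together with naturality of the right unitor $r$ of $\slice{\M}{\T}$ gives $f \compose p = r^{-1}_{R''} \compose (f \ctensor \cI) \compose \goper_\cI \compose \prod_i r_{R_i} = r^{-1}_{R''} \compose \goper'_\cI \compose \prod_i r_{R_i} = p'$. As such an $f$ is also a slice morphism, it is forced to be the unique diagonal of the orthogonality square above, and is therefore determined. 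The step I expect to be the main obstacle is the diagram bookkeeping in part~(2): one must be scrupulous that $f \ctensor S$ is used only through its universal property as a diagonal, never as if $\ctensor$ were strict, and it is exactly the monoidal (rather than merely skew) hypothesis on $\slice{\M}{\T}$ that makes $r$ invertible and lets the $S = \cI$ recovery argument close.
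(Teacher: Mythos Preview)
Your proposal is correct and follows essentially the same route as the paper: both parts are reduced to \cref{lemma:canonical-operations}, with $\goper_S \in \E$ deduced from the upper triangle of its defining square via the two-out-of-three (right-cancellation) property of $\E$, and part~(2) handled by showing that the condition $(f \ctensor S) \compose \goper_S = \goper'_S$ for all $S$ is equivalent to $f \compose p = p'$ (one direction via $S = \cI$, the other via orthogonality), after which uniqueness of the diagonal fill-in finishes. Your account is more explicit than the paper's in spelling out the recovery formula and the naturality of $r$ in the uniqueness argument, but the underlying argument is the same.
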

\begin{proof}
  The first sentence of (1) is immediate from
  \cref{lemma:canonical-operations}.
  Each $\goper_S$ is in $\E$ because we have
  $\goper_S \compose e = e'$ for some $e, e' \in \E$ (this is the upper
  triangle in the definition of $\goper_S$, using the fact that
  $p$ is in $\E$).
  This implies $\goper_S \in \E$ because $\E$-morphisms satisfy a
  two-out-of-three property.
  For (2), given $\goper'$, we obtain from
  \cref{lemma:canonical-operations} a morphism $p' : \prod_i R_i \to
  R''$ making the diagram on the left below commute.
  \[
    \begin{tikzcd}
      \prod_i R_i \arrow[r, "p'"] \arrow[d, tail] &
      R'' \arrow[d, tail] \\
      T^n \arrow[r, "\oper"] &
      T
    \end{tikzcd}
    \qquad
    \begin{tikzcd}[column sep=large]
      \prod_i R_i
      \arrow[r, "p", two heads]
      \arrow[d, "p'"'] &
      R'
      \arrow[d, tail]
      \arrow[ld, "f"{description}, dashed] \\
      R''
      \arrow[r, tail] &
      T
    \end{tikzcd}
  \]
  For a morphism $f : R' \to R''$ in $\slice{\M}{T}$, the condition that
  $(f \ctensor S) \compose \goper_S = \goper'_S$ for all $S$ implies
  (using $S = \cI$) that $f \compose p = p'$.
  The converse also holds, using orthogonality.
  Hence the conditions on the morphism $f$ are equivalent to
  commutativity of the square on the right above.
  The outside of the square commutes and $p$ is in $\E$, so there exists a
  unique $f$.
\end{proof}

\begin{example}
  Consider the writer monad given by $T = M \times ({-})$ from
  \cref{ex:writer-monoidal}.
  Every $z \in M$ induces a unary algebraic operation
  $\oper_z : T \to T$, defined by
  $
    \oper_{z,X} (z', x) = (z \cdot z', x)
  $.
  When $\M$ is the class of componentwise injective natural
  transformations, the canonical $\M$-grading of $\T$ has
  subsets $\Sigma \subseteq M$ as grades, and
  $\slicedom{\M}{\T} \Sigma = \Sigma \times ({-})$.
  Every input grade $P \subseteq M$ induces a canonical output grade
  $P'_z \subseteq M$ and algebraic operation
  $
    \goper_{z, \Sigma}
      : \slicedom{\M}{\T} (P \ctensor \Sigma) \natto \slicedom{\M}{\T} (P'_z \ctensor \Sigma)
  $,
  and these turn out to be:
  \[
    P'_z = \{z \cdot z' \mid z' \in P\}
    \qquad
    \goper_{z, \Sigma, X} (z', x) = (z \cdot z', x)
  \]
\end{example}

\begin{example}
  Let $\T$ be the list monad on $\Set$.
  This has a binary algebraic operation
  $(\append) : T \times T \natto T$ that concatenates a pair of lists.
  As we explain in \cref{ex:list-cartesian}, subsets $\Sigma \subseteq \Nat$
  provide a canonical grading of $\T$.
  If $P_1, P_2$ are subsets of $\Nat$, then the grade we construct for
  the algebraic operation $(\append)$ as above is
  $P' = \{ n_1 + n_2 \mid n_1 \in P_1, n_2 \in P_2 \}$, and the
  algebraic operation for $\slicedom{\M}{\T}$ is
  the natural transformation
  $
    \slicedom{\M}{\T}(P_1 \ctensor {-}) \times \slicedom{\M}{\T}(P_2 \ctensor {-})
      \natto \slicedom{\M}{\T}(P' \ctensor {-})
  $
  that maps $(\xs_1, \xs_2)$ to $\xs_1 \append \xs_2$.
\end{example}

\section{Conclusion and future work}

We have demonstrated that factorization systems provide a unifying
framework for the grading of monads by subfunctors, in fact, monoids
with subobjects. Skew monoidal categories turn out to be a more robust
setting for this than monoidal categories, which means, among other
things, that this framework will be directly applicable also to
relative monads.

The abstract framework is pleasingly elegant, but for applications we
would like obtain a stronger intuition for its reach.  We intend to
explore this first by working out the canonical gradings with (strong)
subfunctors of further standard example (strong) monads from
programming semantics, for the factorization systems considered in
this paper and possibly others. Indeed, the examples may point to
further factorization systems of interest. The outcomes of this
exploration will hopefully lead to some new heuristics for the
construction of graded monads for applications such as type-and-effect
systems.

Programming semantics applications also suggest trying grading with
subfunctors on (strong) lax monoidal functors (``applicative
functors'') and (strong) monads in $\mathbf{Prof}$
(``arrows''). Comonads can be graded with quotient functors.


\paragraph{Acknowledgements} D.M.\ and T.U.\ were supported by the
Icelandic Research Fund grants 196323-053 and 228684-051. This work
started during T.U.'s visit to F.B.\ at LIPN as an invited professor
in Sept.\ 2019, but was mostly stalled during the Covid pandemic.


\bibliographystyle{eptcs}
\bibliography{act22}

\clearpage

\appendix

\section{Proof of Theorem~\ref{thm:slice-mon}}
\label{sec:proofs}

Given a monoidal category $(\C, \I, \ot, \lambda, \rho, \alpha)$ with
a monoid object $(T, \eta, \mu)$ and an orthogonal factorization
system $(\E, \M)$. We assume that $\E$ is closed under $(-) \ot X$ for
all $(X, x) \in \M / T$.

Our aim is to show $\M / T$ carries a left-skew monoidal category
structure $((\J, j), \bt, \ell, r, a)$.

The unit $(\J, j)$ and tensor $(X \bt Y, x \bt y)$ of two objects
$(X, x)$, $(Y, y)$ are defined as the factorizations shown in the
diagrams below.

\[
\begin{tikzcd}[row sep=small]
  \I \arrow[rr, "\eta"] \arrow[dr, "q"', two heads]
  & & T \\
  & \J \arrow[ur, "j"', tail]
\end{tikzcd}
\qquad
\begin{tikzcd}[row sep=small]
  X \ot Y \arrow[r, "x \ot y"] \arrow[dr, "q_{x,y}"', two heads]
  & T \ot T \arrow[r, "\mu"]
    & T \\
  & X \bt Y \arrow[ur, "x \bt y"', tail]
\end{tikzcd}
\]


The functorial action of $\bt$ on two morphisms
$f : (X, x) \to (X', x')$ and $g : (Y, y) \to (Y', y')$ is a morphism
$f \bt g : (X \bt Y, x \bt y) \to (X' \bt Y', x' \bt y')$ defined as
the diagonal fill-in of the commuting square below.
\[
\begin{tikzcd}[column sep=large]
  X \ot Y \arrow[dd, "f \ot g"'] \arrow[dr, "x \ot y"] \arrow[r, "q_{x,y}", two heads]
  &  X \bt Y \arrow[dd, bend left=50, dashed, "f \bt g"' near end, tail] \arrow[dr, "x \bt y", tail] \\
  & T \ot T \arrow[r, "\mu"]
    & T \\
  X' \ot Y' \arrow[ur, "x' \ot y'"'] \arrow[r, "q_{x',y'}"', two heads]
  & X' \bt Y' \arrow[ur, "x' \bt y'"', tail]
\end{tikzcd}
\]

The left unitor $\ell$ and associator $a$ are also defined as the diagonal
fill-ins for suitable commuting squares. The right unitor is just a
composition of morphisms.

Definition of $\ell$:
\[
\begin{tikzcd}
  \I \ot X \arrow[dd, "\lambda_X"'] \arrow[dr, "\I \ot x"'] \arrow[r, "q \ot X", two heads]
  & \J \ot X \arrow[dr, "j \ot x"] \arrow[r, "q_{j,x}", two heads]
    & \J \bt X \arrow[dr, "j \bt x", tail] \arrow[lldd, bend right=22, dashed, "\ell_x"', tail] \\
  & \I \ot T \arrow[d, "\lambda_T"'] \arrow[r, "\eta \ot T"]
    & T \ot T \arrow[r, "\mu"]
      & T \\
  X \arrow[r, "x"', tail]
  & T \arrow[urr, equals]
\end{tikzcd}
\]

Definition of $r$:
\[
\begin{tikzcd}
  X \arrow[ddrr, bend right=22, dashed, "r_x"', tail] \arrow[dd, "\rho_X"'] \arrow[r, "x", tail] 
  & T \arrow[d, "\rho_T"'] \arrow[drr, equals] \\
  & T \ot \I \arrow[r, "T \ot \eta"'] 
    & T \ot T \arrow[r, "\mu"']
  & T \\
  X \ot \I \arrow[ur, "x \ot \I"] \arrow[r, "X \ot q"']  
  & X \ot \J \arrow[ur, "x \ot j"'] \arrow[r, "q_{x,j}"', two heads]
    & X \bt \J \arrow[ur, "x \bt j"', tail]
\end{tikzcd} 
\]

Definition of $a$:
\[
\begin{tikzcd}[row sep=small]
  (X \ot Y) \ot Z \arrow[dddddd, "\alpha_{X,Y,Z}"'] \arrow[ddr, "(x \ot y) \ot z"']
                                       \arrow[r, "q_{x,y} \ot Z", two heads]
   & (X \bt Y) \ot Z \arrow[ddr, "(x \bt y) \ot z"]
                                       \arrow[r, "q_{x \bt y,z}", two heads] 
     & (X \bt Y) \bt Z \arrow[dddddd, bend right=37, dashed, "a_{x,y,z}"]
                                       \arrow[dddr, "(x \bt y) \bt z", tail]
                                                                    \\ \\
  & (T \ot T) \ot T \arrow[dd, "\alpha_{T,T,T}"'] \arrow[r, "\mu \ot T"']
    & T \ot T \arrow[dr, "\mu"'] \\
  & & & T \\
  & T \ot (T \ot T) \arrow[r, "T \ot \mu"]
    & T \ot T \arrow[ur, "\mu"] \\ \\
  X \ot (Y \ot Z) \arrow[uur, "x \ot (y \ot z)"] \arrow[r, "X \ot q_{y,z}"']
  & X \ot (Y \bt Z) \arrow[uur, "x \ot (y \bt z)"'] \arrow[r, "q_{x, y\bt z}"', two heads]
    & X \bt (Y \bt Z) \arrow[uuur, "x \bt (y \bt z)"', tail]
\end{tikzcd}
\]

The proofs of functoriality of $\bt$ and naturality of $\ell$, $r$ and
$a$ are easy and omitted. 






The equations (m1)--(m5) for $\ell$, $r$, $a$ are each proved from the
respective equations of $\lam$, $\rho$, $\al$ using the properties of
$\bt$, $\ell$, $r$, $a$ arising from their construction (the two
triangles that the fill-in breaks the square into). For each equation
$\mathit{lhs} = \mathit{rhs}$, the two sides $\mathit{lhs}$ and $\mathit{rhs}$ are both shown to be the
diagonal fill-in of a square of the form
$s \compose e = s' \compose f$ where $e$ is a suitable $\E$-morphism
and $s$ and $s'$ are the common domain resp.\ codomain of $\mathit{lhs}$ and
$\mathit{rhs}$ as morphisms in $\M/T$. Below are the diagram chases for the
triangles $\mathit{lhs} \compose e = f = \mathit{rhs} \compose e$; the triangles
$s' \compose \mathit{lhs} = s = s' \compose \mathit{rhs}$ are straightforward.

Proof of (m1):
\[
\begin{tikzcd}
  \I \arrow[dd, equals] \arrow[dr, "\rho_\I"'] \arrow[r, "q", two heads]      
  & \J \arrow[r, "\rho_\J"'] \arrow[drrr, bend left=30, "r_j", tail]
    & \J \ot \I \arrow[dr, "\J \ot q"]
      & \\
  & \I \ot \I \arrow[dl, "\lambda_\I"'] \arrow[ur, "q \ot \I"', two heads] \arrow[dr, "\I \ot q"]
    & & \J \ot \J \arrow[r, "q_{j,j}"', two heads]
        & \J \bt \J \arrow[dlll, bend left=30, "\ell_j", tail] \\
  \I \arrow[r, "q", two heads]        
  & \J
    & \I \ot \J \arrow[l, "\lambda_\J"'] \arrow[ur, "q \ot \J"', two heads]
    &
\end{tikzcd}
\]  

Proof of (m2):
\[
\begin{tikzcd}[column sep=large]
  & (\J \ot Y) \ot Z  \arrow[d, "\alpha_{\J,Y,Z}"] \arrow[r, "q_{j,y} \ot Z", two heads]
    & (\J \bt Y) \ot Z \arrow[r, "q_{j \bt y, z}", two heads]
      & (\J \bt Y) \bt Z \arrow[d, "a_{j,y,z}", tail] \\
  & \J \ot (Y \ot Z)  \arrow[r, "\J \ot q_{y,z}"]
    & \J \ot (Y \bt Z)  \arrow[r, "q_{j, y \bt z}", two heads]
      & \J \bt (Y \bt Z) \arrow[dd, "\ell_{y \bt z}", tail]\\
  (\I \ot Y) \ot Z \arrow[r, "\alpha_{\I,Y,Z}"] \arrow[uur, "(q \ot Y) \ot Z", two heads] \arrow[dr, "\lambda_Y \ot Z"] \arrow[ddr, "(q \ot Y) \ot Z"', two heads]     
  & \I \ot (Y \ot Z) \arrow[u, "q \ot (Y \ot Z)"', two heads] \arrow[d, "\lambda_{Y \ot Z}"] \arrow[r, "\I \ot q_{y,z}"]
    & \I \ot (Y \bt Z) \arrow[u, "q \ot (Y \bt Z)"', two heads] \arrow[dr, "\lambda_{Y \bt Z}"]
      & \\
  & Y \ot Z \arrow[rr, "q_{y,z}", two heads]
    & & Y \bt Z \\  
  & (\J \ot Y) \ot Z \arrow[r, "q_{j,y} \ot Z"', two heads]
    & (\J \bt Y) \ot Z \arrow[ul, "\ell_y \ot Z"'] \arrow[r, bend left=10, "q_{y \circ \ell_y,z}", two heads] \arrow[r, bend right=10, "q_{j \bt y, z}"', two heads]
      & (\J \bt Y) \bt Z \arrow[u, "\ell_y \bt Z"', tail] \\
\end{tikzcd}
\]

Proof of (m3):
\[
\begin{tikzcd}[row sep=small,column sep=large]
  & X \bt Y \arrow[r, "\rho_{X \bt Y}"] \arrow[rrr, bend left=15, "r_{x \bt y}", tail]
    & (X \bt Y) \ot \I  \arrow[r, "(X \bt Y) \ot q"]
      & (X \bt Y) \ot \J \arrow[r, "q_{x \bt y, j}", two heads]
        & (X \bt Y) \bt \J \arrow[dddddd, "a_{x,y,j}", tail]\\ \\
  & (X \ot Y) \ot \I \arrow[dd, "\alpha_{X,Y,\I}"] \arrow[uur, "q_{x,y} \ot \I"', two heads]
                                            \arrow[r, "(X \ot Y) \ot q"']     
    & (X \ot Y) \ot \J \arrow[dd, "\alpha_{X,Y,\J}"] \arrow[uur, "q_{x,y} \ot \J"']
      & & \\
  X \ot Y \arrow[dddr, bend right=10, "q_{x,y}"', two heads] \arrow[dddr, bend left=10, "q_{x,y \bt j \circ q_{y,j} \circ Y \ot q \circ \rho_Y}" very near end, two heads] \arrow[dr, "X \ot \rho_Y"] \arrow[ur, "\rho_{X \ot Y}"'] \arrow[uuur, "q_{x,y}", two heads] \\
  & X \ot (Y \ot \I) \arrow[r, "X \ot (Y \ot q)"] 
    & X \ot (Y \ot \J) \arrow[r, "X \ot q_{y,j}"]
      & X \ot (Y \bt \J) \arrow[ddr, "q_{x, y \bt j}", two heads]
        & \\ \\
  & X \bt Y \arrow[r, "X \bt \rho_Y"] \arrow[rrr, bend right=15, "X \bt r_y"'] 
    & X \bt (Y \ot \I) \arrow[r, "X \bt (Y \ot q)"]
      & X \bt (Y \ot \J) \arrow[r, "X \bt q_{y,j}"]
        & X \bt (Y \bt \J)
\end{tikzcd}
\]  

Proof of (m4):
\[
\begin{tikzcd}
  & X \bt Z \arrow[r, "\rho_X \bt Z"'] \arrow[rrr, bend left=15, "r_x \bt Z"]
    & (X \ot \I) \bt Z \arrow[r, "(X \ot q) \bt Z"']
      & (X \ot \J) \bt Z \arrow[r, "q_{x,j} \bt Z"', two heads]
        & (X \bt \J) \bt Z \arrow[ddd, "a_{x,j,z}", tail] \\
  X \ot Z \arrow[d, equals] \arrow[r, "\rho_X \ot Z"'] \arrow[ur, bend left=10, "q_{x,z}", two heads]
          \arrow[ur, bend right=10, "q_{x \bt j \circ q_{x,j} \circ X \ot q \circ \rho_X,z}"' near end, two heads]      
  & (X \ot \I) \ot Z \arrow[d, "\alpha_{X,\I,Z}"] \arrow[r, "(X \ot q) \ot Z"]
    & (X \ot \J) \ot Z \arrow[d, "\alpha_{X,\J,Z}"'] \arrow[r, "q_{x,j} \ot Z", two heads]
      & (X \bt \J) \ot Z \arrow[ur, "q_{x \bt j,z}"', two heads]     
        & \\
  X \ot Z \arrow[dr, "q_{x,z}"', two heads]      
  & X \ot (\I \ot Z) \arrow[l, "X \ot \lambda_Z"'] \arrow[dr, bend right=10, "q_{x, z \circ \lambda_Z}"' near start, two heads] \arrow[dr, bend left=10, "q_{x,j \bt z \circ q_{j,z} \circ q \ot Z}" near end, two heads] \arrow[r, "X \ot (q \ot Z)"]
    & X \ot (\J \ot Z) \arrow[r, "X \ot q_{j,z}"]
      & X \ot (\J \bt Z) \arrow[dr, "q_{x, j \bt Z}", two heads]
        & \\
  & X \bt Z
    & X \bt (\I \ot Z) \arrow[l, "X \bt \lambda_Z"'] \arrow[r, "X \bt (q \ot Z)"]
      & X \bt (\J \ot Z) \arrow[r, "X \bt q_{j,z}"]
        & X \bt (\J \bt Z) \arrow[lll, bend left=15, "X \bt \ell_z"]
\end{tikzcd}
\]  

Proof of (m5):
\[
\scriptsize
\begin{tikzcd}[column sep=large]
  & ((X \bt Y) \ot Z) \ot W \arrow[dr, "\alpha_{X \bt Y,Z,W}"] \arrow[r, "q_{x \bt y, z} \ot W", two heads]
    & ((X \bt Y) \bt Z) \ot W \arrow[r, "q_{(x \bt y) \bt z, w}", two heads]
      & ((X \bt Y) \bt Z) \bt W \arrow[dr, "a_{x \bt y, z, w}", tail]
         \\
  & (X \ot Y) \ot (Z \ot W) \arrow[dd, "\alpha_{X,Y,Z \ot W}"] \arrow[dr, "(X \ot Y) \ot q_{z,w}"] \arrow[r, "q_{x,y} \ot (Z \ot W)", two heads]
    & (X \bt Y) \ot (Z \ot W)  \arrow[r, "(X \bt Y) \ot q_{z,w}"]    
      & (X \bt Y) \ot (Z \bt W) \arrow[r, "q_{x \bt y, z \bt w}", two heads]
        & (X \bt Y) \bt (Z \bt W) \arrow[dd, "a_{x,y,z \bt w}", tail]
       \\
  & & (X \ot Y) \ot (Z \bt W) \arrow[d, "\alpha_{X,Y,Z \bt W}"]
      \arrow[ur, "q_{x,y} \ot (Z \bt W)"', two heads]
      & \\    
  (X \ot Y) \ot Z) \ot W \arrow[uur, "\alpha_{X \ot Y,Z,W}"' near start] \arrow[uuur, "(q_{x,y} \ot Z) \ot W", two heads]
      \arrow[ddr, "\alpha_{X,Y,Z} \ot W" near start]
      \arrow[dddr, "(q_{x,y} \ot Z) \ot W"', two heads]     
  & X \ot (Y \ot (Z \ot W)) \arrow[r, "X \ot (Y \ot q_{z,w})"] \arrow[r, "X \ot (Y \ot q_{z,w})"]
    & X \ot (Y \ot (Z \bt W)) \arrow[r, "X \ot q_{y,z \bt w}"] \arrow[r, "X \ot q_{y, z \bt w}"]
      & X \ot (Y \bt (Z \bt W)) \arrow[r, "q_{x, y \bt (z \bt w)}", two heads]
        & X \bt (Y \bt (Z \bt W)) \\
  & X \ot ((Y \ot Z) \ot W) \arrow[u, "X \ot \alpha_{Y,Z,W}"'] \arrow[r, "X \ot (q_{y,z} \ot W)"]
     & X \ot ((Y \bt Z) \ot W) \arrow[r, "X \ot q_{y \bt z, w}"]
       & X \ot ((Y \bt Z) \bt W) \arrow[u, "X \ot a_{y,z,w}"] \arrow[r, "q_{x, (y \bt z) \bt w}", two heads]
         & X \bt ((Y \bt Z) \bt W) \arrow[u, "X \bt a_{y,z,w}"', tail] \\
  & (X \ot (Y \ot Z)) \ot W \arrow[u, "\alpha_{X,Y \ot Z,W}"'] \arrow[r, "(X \ot q_{y,z}) \ot W"] 
    & (X \ot (Y \bt Z)) \ot W \arrow[u, "\alpha_{X,Y \bt Z,W}"'] \arrow[r, "q_{x, y \bt z} \ot W", two heads]
      & (X \bt (Y \bt Z)) \ot W \arrow[r, "q_{x \bt (y \bt z), w}", two heads]
        & (X \bt (Y \bt Z)) \bt W \arrow[u, "a_{x,y\bt z,w}"', tail] \\
  & ((X \bt Y) \ot Z) \ot W \arrow[r, "q_{x \bt y, z} \ot W"', two heads]
    & ((X \bt Y) \bt Z) \ot W \arrow[ur, "a_{x,y,z} \ot W", tail] \arrow[r, "q_y{(x \bt y) \bt z, w}"', two heads]
      & ((X \bt Y) \bt Z) \bt W \arrow[ur, "a_{x,y,z} \bt W"', tail]
        &     
\end{tikzcd} 
\]  

\end{document}